\documentclass[11pt,a4paper]{article}
\usepackage{amsfonts,amssymb}
\usepackage{cite}
\usepackage[T2A]{fontenc}
\usepackage[cp1251]{inputenc}
\usepackage[english]{babel}
\usepackage{amsmath}
\usepackage{amsthm}
\usepackage[unicode]{hyperref}
\oddsidemargin -10,0mm
\topmargin -6,0mm \textwidth 185mm \textheight 225mm

\font\Sets=msbm10

\def\Z {\hbox{\Sets Z}}
\DeclareMathOperator{\dn}{\mathrm{dn}}
\DeclareMathOperator{\sn}{\mathrm{sn}}
\DeclareMathOperator{\cn}{\mathrm{cn}}

\usepackage{graphicx}
\graphicspath{}
\DeclareGraphicsExtensions{.pdf,.png,.jpg,.jpeg}

\newtheorem{theorem}{Theorem}[section]
\newtheorem{definition}{Definition}[section]
\newtheorem{proposition}{Proposition}[section]
\newtheorem{example}{Example}[section]

\newtheorem{remark}{Remark}[section]

\numberwithin{equation}{section}

\title{\bf Construction of exact solutions to the Ruijsenaars-Toda \\ lattice via generalized invariant manifolds}
\author{\bf I.T. Habibullin$^{1,}$\footnote{Corresponding author. E-mail address: habibullinismagil@gmail.com (I.T. Habibullin).} , А.R. Khakimova$^{1}$, A.O. Smirnov$^{2}$\\  \\
\scriptsize $^{1}$Institute of Mathematics, Ufa Federal Research Centre, Russian Academy of Sciences, Ufa 450008, Russia\\
\scriptsize $^{2}$Saint-Petersburg State University of Aerospace Instrumentation, St. Petersburg 190000, Russia}
\date{}

\allowdisplaybreaks
\begin{document}

\maketitle

\abstract{The article discusses a new method for constructing algebro-geometric solutions of nonlinear integrable lattices, based on the concept of a generalized invariant manifold (GIM). In contrast to the finite-gap integration method, instead of the eigenfunctions of the Lax operators, we use a joint solution of the linearized equation and GIM. This makes it possible to derive Dubrovin type equations not only in the time variable $t$, but also in the spatial discrete variable $n$. We illustrate the efficiency of the method using the Ruijsenaars-Toda lattice as an example, for which we have derived a real and bounded  particular solution in the form of a kink, a periodic solution expressed in terms of Jacobi elliptic functions and a solution expressed through Weierstrass $\wp$-function.}

\large

\section{Introduction}

The article deals with the problem of describing particular solutions of the nonlinear integrable lattices. Here, instead of the standard Lax pair, we use an object called a generalized invariant manifold as the main research tool. We take such a classical model as the relativistic Toda lattice (or Ruijsenaars-Toda lattice) \cite{Ruijsenaars} as basic object. 

Note that nonlinear integrable partial differential equations and their various discrete analogs are currently in demand and are being actively studied. The reader can get acquainted with the applications of the integrable lattices in theoretical physics in more detail in the monograph \cite{Arutyunov}.

It is well known that a characteristic feature of integrability is that the equation has an infinite hierarchy of higher symmetries represented in the form of equations of evolutionary type commuting with the equation under consideration. This circumstance underlies an effective symmetry approach to the problem of complete description and classification of integrable equations \cite{MikhailovShabatSokolov}. On the other hand, symmetries make it possible to construct wide classes of explicit particular solutions of integrable equations. The point is that stationary reductions of symmetries of a given equation are ordinary differential (difference) equations compatible with this equation. Moreover, as examples show, they are Hamiltonian systems with complete sets of integrals in involution, which ensures their complete integrability (see, for instance, \cite{DubrovinMatveevNovikov, Veselov, MEKL, Belokolos, ItsMatveev, Batchenko, KacMoerbeke, Zhao, ZhouJiang, GengWuCao, CaoGengWang}).

The integrable equations have a Lax pair, i.e. a pair of linear differential (difference) operators that commute if and only if the function included in the coefficients of the operators satisfies the considered equation (see, for instance, \cite{Zakharov}). The solutions of stationary reductions of higher symmetries of the equation are closely related to the spectral characteristics of the Lax operators. Namely, such solutions correspond to Lax operators, the spectra of which have only a finite number of forbidden zones (lacunae). The spectral curve defines a two-sheeted Riemann surface of finite genus. The eigenfunctions of the Lax operators are expressed in terms of the Baker-Akhiezer function, which is defined and meromorphic on the Riemann surface with an essential singularity of a given type (see survey \cite{DubrovinMatveevNovikov}).

As an illustrative example, let us give some fragments of the article \cite{DubrovinMatveevNovikov} concerning the method of constructing finite-gape solutions of the KdV equation $u_t=6uu_x-u_{xxx}$. The finite-gape solution $u(x,t)$ is expressed in the form 
\begin{align}\label{u_kdv}
u=-2\sum_{j=1}^n\gamma_j+\sum_{i=1}^{2n+1}E_i,
\end{align}
where $E_1<E_2<\ldots<E_{2n+1}$ are arbitrary real constants. Functions $\gamma_1,\gamma_2,\ldots,\gamma_n$ satisfy a consistent pair of systems of ordinary differential equations, called the Dubrovin equations: 
\begin{align}\label{gammax_kdv}
\frac{d}{dx}\gamma_j=\pm \frac{2i\sqrt{P_{2n+1}(\gamma_j)}}{\prod_{k\neq j}\left(\gamma_k-\gamma_j\right)}, \quad i^2=-1,
\end{align}
\begin{align}\label{gammat_kdv}
\frac{d}{dt}\gamma_j=\pm \frac{\left(8i\sum_{s\neq j}\gamma_j-4i\sum_{j=1}^{2n+1}E_j\right)\sqrt{P_{2n+1}(\gamma_j)}}{\prod_{k\neq j}(\gamma_k-\gamma_j)},
\end{align}
where $P_{2n+1}(E)=\prod_{j=1}^{2n+1}\left(E-E_j\right)$.
Thus, formulas \eqref{u_kdv}-\eqref{gammat_kdv} reduce the problem of constructing a solution to a partial differential equation to a simpler problem of integrating ordinary differential equations.

The finite-gape integration method is effectively applied to nonlinear integrable chains as well. Finite-gap solutions of nonlinear Volterra and Toda lattices, as well as the relativistic Toda lattice, were investigated in a number of works \cite{DubrovinMatveevNovikov, Krichever85, Veselov, Ver, Khasanov, Smirnov20, SmirnovAO, Zhao, GongGeng}. In them one can find explicit formulas for the solutions of these chains in terms of the Riemann theta functions, as well as analogues of Dubrovin's equations \eqref{gammat_kdv}, which determine the dynamics in time $t$. As far as the authors know, analogs of the Dubrovin equations describing the dynamics in the spatial variable $n\in Z$ have not been presented before. In \cite{DubrovinMatveevNovikov} it is shown that to obtain such equations, it is necessary to find an explicit solution to a very complex system of higher order algebraic equations.

In the work \cite{HKP2016JPA}, we introduced the concept of a generalized invariant manifold (GIM) of a nonlinear differential (difference, differential-difference) equation, which turned out to be useful in the study of nonlinear equations. It was shown in \cite{HK2020JPA}, \cite{HKS2021UMJ} that the GIM can be effectively used to construct particular solutions of integrable equations. In \cite{HK2020JPA}, using the GIM, analogs of equations \eqref{gammax_kdv} and \eqref{gammat_kdv} for the Volterra chain were derived. However, they are of the second order, and not the first, as should be in the Dubrovin equations. In this paper, we improve the approach proposed in \cite{HK2020JPA} and select the GIM in such a way that the equations of Dubrovin type that follow from it have the correct structure. As a result, we derived discrete analogs of the Dubrovin equations for the relativistic Toda lattice (see \S5 formulas \eqref{dotgj}, \eqref{gjnp1}). It worth mentioning that equation \eqref{gjnp1} defines the B$\ddot{a}$cklund transformation for equation \eqref{dotgj}. Ordinary discrete equations of this kind are actively studied in the framework of the theory of integrable mappings (see, for example, \cite{Veselov2, Joshi, Nijhoff}).  

Let us explain the essence of the notion of generalized invariant manifold. First we recall a notion of the invariant manifold.  The invariant manifold method is a widespread method for constructing particular solutions of nonlinear partial differential equations (see, for example, \cite{Yanenko}). This method is based on the following technique. Some simpler equation, is assigned to the investigated equation, that is assumed to be compatible with it. This equation is called invariant manifold or differential constraint, some authors use the term conditional symmetry (see \cite{Zhdanov}, \cite{Sergeyev}). The consistency of the two equations ensures that a solution to the added equation can be found, which will also be a solution to the original one. It remains only to find such an effectively solvable equation for a given nonlinear equation. Generally speaking, this is a difficult problem. One of the effective tools for searching invariant manifolds is based on the classical and higher symmetries  \cite{MikhailovShabatSokolov, Olver}.  In \cite{FokasLiu}, \cite{Zhdanov} the notion of generalized conditional symmetry was introduced providing a method for finding invariant manifolds and exact solutions of the nonlinear partial differential equations (see also \cite{Sergeyev}). An effective analytical method for finding invariant manifolds can be found in \cite {Demina}. Note that in this work, the Puiseux series method was used to describe the algebraic invariants of the Kuramoto -- Sivashinsky dispersion equation. The found invariant manifolds allowed the author to construct new exact solutions. Now we turn back to our generalization. The essence of the concept of a generalized invariant manifold is that we add an invariant manifold not to the equation itself, but to its linearization. The exact definition of GIM can be found in \S2 (Definition 2.1). In contrast to usual invariant manifold GIM is effectively found (see \S2, comment after Definition 2.1). 

It should be noted that the class of generalized invariant manifolds is quite extensive. In examples \ref{Ex1} and \ref{Ex2}, we consider generalized invariant manifolds with the sought function $U_n$ or respectively, a pair of sought functions $U_n$, $V_n$ of small orders, assuming that variables $u_n, u_{n\pm 1}, \ldots$, as well as variables $v_n, v_{n\pm 1}, \ldots$ enter the equation as functional parameters. We remark that these equations are solved explicitly, their solutions coincide with the generators of classical symmetries.

In more complicated examples, GIMs contain arbitrary constants parameters. Generally speaking, these GIMs do not admit explicit formulas for the general solution, but they admit formal solutions in the form of asymptotic series in the neighborhood of singular values of the parameters. It is remarkable that the termination of the asymptotic series generates ordinary differential (difference) equations, which are actually ordinary invariant manifolds for the nonlinear equation under consideration. In other words, generalized invariant manifolds can be used to construct ordinary invariant manifolds suitable for constructing particular solutions of the equation (see \eqref{fN} and \eqref{gN}).

We note that a generalized invariant manifold differs significantly from the concept of generalized conditional symmetry defined in \cite{Zhdanov}, \cite{FokasLiu}. However, as noted in \cite{ZYZhang},  GIM can be interpreted as a generalized conditional symmetry for a system composed of a given equation and its linearization.

Let's briefly discuss how the article is organized. In \S2, we formulate a rigorous definition of the concept of a generalized invariant manifold for nonlinear lattices, give illustrative examples, and discuss potential applications. In section 3, we construct the simplest meaningful example of a second-order generalized invariant manifold for the Ruijsenaars-Toda lattice. It is noteworthy that it contains a dependence on two arbitrary constant parameters. The detailed derivation of this invariant manifold is simple, but rather cumbersome, so we brought it into the Appendix. The linearized equation and the generalized invariant manifold generate a system of ordinary differential equations \eqref{RTL_oim} and a system of ordinary discrete equations \eqref{RTL_lin}. These two systems are compatible. In Section 4, we construct a formal asymptotic expansion in powers of the parameter  $\lambda$ of solutions  of \eqref{RTL_oim} and \eqref{RTL_lin}. The assumption that the formal asymptotic series terminates generates additional ordinary differential equations for the desired solution of the original nonlinear lattice, i.e. they define the usual invariant manifolds \eqref{fN} and \eqref{gN}. The equations obtained are rather complicated, although in simple cases they can be solved. The situation is somewhat simplified when we initially assume that the given solutions of equations \eqref{RTL_oim} and \eqref{RTL_lin} are polynomials on $\lambda$ parametrized by their roots (see Section 5). In this case, we obtain differential and discrete equations on the roots of the polynomials, which are analogs of the Dubrovin equations. We illustrate the effectiveness of this approach in Section 6 by constructing exact solutions of the lattice. Here we find a simple regular solution in the form of a kink, a periodic solution with the period 2, expressed in terms of the elliptic Jacobi functions and a solution expressed through Weierstrass $\wp$-function, the last two solutions are not regular. As far as we know, there are no examples of real and bounded quasi periodic solutions of the Ruijsenaars-Toda lattice.

\section{Preliminaries}

Let us briefly recall the main stages of the method of invariant manifolds using the example of differential-difference equations (lattices)
\begin{align}
\frac{d u_n}{dt}=f(u_{n+k}, u_{n+k-1},\dots ,u_{n-k}), \label{main2}
\end{align}
where the sought function $u=u_n(t)$ depends on a continuous variable $t\in \textbf{R}$ and a discrete variable $n\in \Z$.
A difference equation of the form 
\begin{align}
u_{n+s}=g(u_{n+s-1},u_{n+s-2},\dots,u_{n}) \label{inv}
\end{align}
defines an invariant manifold for the lattice (\ref{main2}) if the following condition is satisfied
\begin{align}
\left.\frac{d}{dt}g-D^s_nf\right|_{\eqref{main2},\eqref{inv}}=0, \label{inv-cond} 
\end{align}
where it is assumed that the derivative $\frac{d g} {du_n}$ does not vanish identically. The shift operator $ D_n $ acts according to the rule $ D_nu_n = u_ {n + 1} $.

Due to the compatibility condition a solution $u_n=u_n(t)$ of the equation \eqref{main2} satisfying relation \eqref{inv} at some moment of time  $t=t_0$ will satisfy the relation for all values of $t$.  Actually this means the invariance of the equation \eqref{inv} with respect to the equation \eqref{main2}.

Obviously relation \eqref{inv-cond} is nothing but a nonlinear differential-difference equation with unknown $g$. Sometimes this equation can be solved explicitly, but as a rule, the problem of finding the function $ g $ is extremely difficult. The situation changes noticeable if we look for an ordinary differential equation that is compatible with the linearization of the nonlinear equation \eqref{main2} at the vicinity of its arbitrary solution.

Let us recall the precise definition of the generalized invariant manifold for equation (\ref{main2}). First we evaluate the linearization (Fr$\acute{e}$chet derivative) of the equation:
\begin{align}
\frac{d U_n}{dt}=\frac{\partial f}{\partial u_{n+k}}U_{n+k}+\frac{\partial f}{\partial u_{n+k-1}}U_{n+k-1}+\ldots+\frac{\partial f}{\partial u_{n-k}}U_{n-k}. \label{main-lin} 
\end{align}
For the linearized equation \eqref{main-lin} function $U_n=U_n(t)$ is a sought function while an arbitrary solution $u_n=u_n(t)$ to the equation \eqref{main2} is considered as a functional parameter.

\begin{definition} A surface in the space of the dynamical variables, defined by the equation 
\begin{align}
U_{n+j}=F(U_{n+j-1},U_{n+j-2},\ldots,U_{n},u_{n},u_{n\pm 1},\ldots,u_{n\pm i}), \quad \frac{\partial F}{\partial U_n}\neq 0\label{main-inv}
\end{align}
is called a generalized invariant manifold for equation (\ref{main2}), if the condition holds
\begin{align}
\left.\frac{d}{dt}F-D_n^j\frac{d U_n}{dt} \right|_{\eqref{main2},\eqref{main-lin},\eqref{main-inv}}=0. \label{main-cond}
\end{align}
\end{definition}
In the latter equation derivatives of the variables $U_{l}$ и $u_{l}$ with respect to $t$ should be excluded due to equations (\ref{main2}) and (\ref{main-lin}). Similarly variables $U_{l}$ for $l<n$ and for $l\geq n+j$ are replaced by means of the equations (\ref{main-inv}). In the resulting equation, the variables $ U_ {n + j-1} $, $ U_ {n + j-2} $, \ldots, $ U_ {n} $ and $ u_l $ for any integer $ l $ are considered as independent. Finally, we arrive at an overdetermined system of equations that can be effectively solved. Below we give an illustrative example of a generalized invariant manifold for the Volterra equation.

\begin{remark}
If worth mentioning that for any solution of the system \eqref{main-lin}, \eqref{main-inv} the relation 
\begin{equation}\label{utauU}
u_{n,\tau}=U_n
\end{equation}
defines a symmetry of equation \eqref{main2}. If $U_n$ is explicitly expressed in terms of the variables $n, t, u_n, u_{n\pm1}, u_{n\pm2}, \ldots$ then \eqref{utauU} determines a classical or higher symmetry, otherwise the symmetry is non-local.
\end{remark}

\begin{example}\label{Ex1}
For the Volterra chain \cite{Volterra}
\begin{equation}\label{volterra}
u_{n,t}=u_{n}(u_{n+1}-u_{n-1}),
\end{equation}
where the sought function $u=u_n(t)$ depends on real $t$ and integer $n$. 
Equation
\begin{equation}\label{troivol}
U_{n+1}=\frac{u_{n+1}+tu_{n+1,t}}{u_{n}+tu_{n,t}}U_n
\end{equation}
defines a generalized invariant manifold, since as it is easily verified equation \eqref{troivol} is compatible with the linearized equation 
\begin{equation}\label{volterralin}
U_{n,t}=u_n(U_{n+1}-U_{n-1})+(u_{n+1}-u_{n-1})U_n.
\end{equation}
The same is true for the equation 
\begin{equation}\label{troi2vol}
U_{n+2}-\frac{u_{n}u_{n+2,t}-u_{n,t}u_{n+2}}{u_{n}u_{n+1,t}-u_{n,t}u_{n+1}}U_{n+1}+ \frac{u_{n+1}u_{n+2,t}-u_{n+1,t}u_{n+2}}{u_{n}u_{n+1,t}-u_{n,t}u_{n+1}}U_{n}=0.
\end{equation}

It is easily checked that equations \eqref{troivol} and \eqref{troi2vol} are solved in a closed form, their general solutions are  $U_n=c(u_{n}+tu_{n,t})$ and, respectively, $U_n=c_1(u_{n}+tu_{n,t})+c_2u_{n,t}$. They correspond to the classical symmetries
$$u_{n,\tau}=c(u_{n}+tu_{n,t}) \quad \mbox{and}\quad u_{n,\tau}=c_1(u_{n}+tu_{n,t})+c_2u_{n,t}.$$
By taking the stationary reductions of the second of the symmetries we get an exact solution for the chain \eqref{volterra}
$$u_n(t)=\frac{a+b(-1)^n+kn}{2kt+b},$$
where $a$, $b>0$, $k>0$ are arbitrary constants.
\end{example}

The notion of GIM is easily adopted to the case of system of the differential-difference equations. Formulas \eqref{main-lin}-\eqref{main-cond} can be rewritten for the case of systems. For greater clarity, we present the definition of GIM for a system of two equations with two unknowns. Consider a system of the form
\begin{equation}\label{sys0}
\begin{aligned}
\frac{d u_n}{dt}=f(u_{n+k},v_{n+k}, u_{n+k-1},v_{n+k-1},\dots ,u_{n-k},v_{n-k}),\\
\frac{d v_n}{dt}=g(u_{n+k},v_{n+k}, u_{n+k-1},v_{n+k-1},\dots ,u_{n-k},v_{n-k}),
\end{aligned} 
\end{equation}
where the sought functions $u=u_n(t)$, $v=v_n(t)$ depend on a continuous variable $t\in \textbf{R}$ and a discrete variable $n\in \Z$. The linearization of \eqref{sys0} is given by the following system of equations
\begin{equation}\label{lin_sys0}
\begin{aligned}
\frac{d U_n}{dt}=\frac{\partial f}{\partial u_{n+k}}U_{n+k}+\frac{\partial f}{\partial v_{n+k}}V_{n+k}+\ldots+\frac{\partial f}{\partial u_{n-k}}U_{n-k}+\frac{\partial f}{\partial v_{n-k}}V_{n-k},\\
\frac{d V_n}{dt}=\frac{\partial g}{\partial u_{n+k}}U_{n+k}+\frac{\partial g}{\partial v_{n+k}}V_{n+k}+\ldots+\frac{\partial g}{\partial u_{n-k}}U_{n-k}+\frac{\partial g}{\partial v_{n-k}}V_{n-k}.
\end{aligned} 
\end{equation}
Here $U_n=U_n(t)$, $V_n=V_n(t)$ are the sought functions, and an arbitrary solution $u=u_n(t)$, $v=v_n(t)$ of the system \eqref{sys0} is considered as a functional parameter.
\begin{definition} A surface in the space of the dynamical variables, defined by the equations 
\begin{equation}\label{oim_sys0}
\begin{aligned}
U_{n+i}=F(U_{n+i-1},V_{n+j-1},U_{n+i-2},V_{n+j-2},\ldots,U_{n},V_{n},u_{n},v_{n},\ldots,u_{n\pm s},v_{n\pm s}),\\
V_{n+j}=G(U_{n+i-1},V_{n+j-1},U_{n+i-2},V_{n+j-2},\ldots,U_{n},V_{n},u_{n},v_{n},\ldots,u_{n\pm s},v_{n\pm s}),
\end{aligned}
\end{equation}
where $\frac{\partial F}{\partial U_n}\neq 0$, $\frac{\partial F}{\partial V_n}\neq 0$, $\frac{\partial G}{\partial U_n}\neq 0$, $\frac{\partial G}{\partial V_n}\neq 0$, is called a generalized invariant manifold for equation (\ref{main2}), if the conditions hold
\begin{equation}\label{cond_sys0}
\begin{aligned}
\left.\frac{d}{dt}F-D_n^i\frac{d U_n}{dt} \right|_{\eqref{sys0},\eqref{lin_sys0},\eqref{oim_sys0}}=0, \\
\left.\frac{d}{dt}G-D_n^j\frac{d V_n}{dt} \right|_{\eqref{sys0},\eqref{lin_sys0},\eqref{oim_sys0}}=0.
\end{aligned}
\end{equation}
\end{definition}


\section{Generalized invariant manifold}

Here we investigate the following differential-difference equation \cite{Ruijsenaars} 
\begin{align}\label{RTL0}
x_{n,tt}=&(1+hx_{n,t})(1+hx_{n-1,t})\frac{e^{x_{n-1}-x_n}}{1+h^2e^{x_{n-1}-x_n}} \nonumber\\
&-(1+hx_{n,t})(1+hx_{n+1,t})\frac{e^{x_{n}-x_{n+1}}}{1+h^2e^{x_{n}-x_{n+1}}}
\end{align} 
called relativistic Toda lattice or the Ruijsenaars-Toda lattice, here $h=\frac{1}{c}$, where $c$ is the light speed. Under the substitution 
\begin{align}\label{uvx}
u_n=\frac{h\left(1+hx_{n,t}\right)e^{x_{n}-x_{n+1}}}{1+h^2e^{x_{n}-x_{n+1}}}, \qquad v_n=\frac{1+hx_{n,t}}{h(1+h^2e^{x_{n}-x_{n+1}})}
\end{align} 
the lattice \eqref{RTL0} reduces to a system of Toda type 
\begin{equation}
\begin{aligned}\label{RTL}
u_{n,t}&=u_n\left(u_{n-1}-u_{n+1}+v_n-v_{n+1}\right),\\
v_{n,t}&=v_n\left(u_{n-1}-u_{n}\right).
\end{aligned} 
\end{equation}
Below we concentrate on \eqref{RTL} since it is more easy to handle with. Passing from \eqref{RTL} to the original form is discussed at the end of \S 3.3.2. 

In order to look for the generalized invariant manifold for \eqref{RTL} first we find its linearization:
\begin{equation}
\begin{aligned}\label{RTL_lin0}
U_{n,t}&=\left(u_{n-1}-u_{n+1}+v_n-v_{n+1}\right)U_n+u_n\left(U_{n-1}-U_{n+1}+V_n-V_{n+1}\right),\\
V_{n,t}&=\left(u_{n-1}-u_{n}\right)V_n+v_n\left(U_{n-1}-U_{n}\right).
\end{aligned} 
\end{equation}

\begin{example}\label{Ex2}
It can be checked by a direct computation that the following system of equations 
\begin{equation}
\begin{aligned}\label{simpleGIM}
U_{n+1}&=\frac{v_{n,t}u_{n+1}-v_{n}u_{n+1,t}}{v_{n,t}u_{n}-v_{n}u_{n,t}}U_{n}+ \frac{u_{n}u_{n+1,t}-u_{n+1}u_{n,t}}{v_{n,t}u_{n}-v_{n}u_{n,t}}V_{n},\\
V_{n+1}&=\frac{v_{n+1,t}v_{n}-v_{n+1}v_{n,t}}{v_{n}u_{n,t}-v_{n,t}u_{n}}U_{n}+ \frac{v_{n+1}u_{n,t}-u_{n}v_{n+1,t}}{{v_{n}u_{n,t}-v_{n,t}u_{n}}}V_{n}
\end{aligned} 
\end{equation}
defines a generalized invariant manifold for \eqref{RTL} since it is compatible with the linearized equation \eqref{RTL_lin0}.
Let us observe that system \eqref{simpleGIM} admits exact solutions of the form
\begin{equation}
\begin{aligned}\label{simpleGIMsol}
U_{n}&=c_1(u_n+tu_{n,t})+c_2u_{n,t},\\
V_{n}&=c_1(v_n+tv_{n,t})+c_2v_{n,t},
\end{aligned} 
\end{equation}
where $c_1$ and $c_2$ are arbitrary constants. These functions are connected with the classical symmetries $u_{n,\tau}=U_n$, $v_{n,\tau}=V_n$ of the system \eqref{RTL} or the same  
\begin{equation*}
\begin{aligned}
&u_{n,\tau}=c_1 u_n+(c_2+c_1 t)u_{n}(u_{n-1}-u_{n+1}+v_n-v_{n+1}),\\
&v_{n,\tau}=c_1 v_n+(c_2+c_1 t)v_{n}(u_{n-1}-u_{n}).
\end{aligned} 
\end{equation*}
Obviously stationary parts of these symmetries generate a solution to \eqref{RTL}:
\begin{equation*}
\begin{aligned}
&u_n(t)=\frac{n}{\varepsilon+t}+c_3,\\
&v_n(t)=-\frac{n}{\varepsilon+t}+c_4,
\end{aligned} 
\end{equation*}
where $\varepsilon=\frac{c_2}{c_1}>0,$ $c_1\neq 0$.
\end{example}

However the class of the generalized invariant manifolds is not exhausted by the classical and higher symmetries only. Mostly interesting cases contain several constant parameters, moreover they are not solved in a closed form. Below we study such kind example for the relativistic Toda lattice.

At first we simplify the linearized equation by using the substitution 
\begin{align}\label{UVPQ}
U_n=u_n\left(P_{n+1}-P_n\right), \qquad V_n=v_n\left(Q_{n+1}-Q_n\right).
\end{align}
Briefly explain the origin of the substitution. It is found by applying the algorithm of the formal linearization to the substitution
\begin{align}\label{zam}
u_n=e^{p_{n+1}-p_n}, \qquad v_n=e^{q_{n+1}-q_n}
\end{align}
relating the system \eqref{RTL} with its close counterpart
\begin{equation*}
\begin{aligned}\label{RTLpq}
p_{n,t}&=-e^{p_{n}-p_{n-1}}-e^{p_{n+1}-p_{n}}-e^{q_{n+1}-q_{n}},\\
q_{n,t}&=-e^{p_{n}-p_{n-1}}.
\end{aligned} 
\end{equation*}

It is remarkable that the substitution \eqref{UVPQ} found due to the manipulations essentially simplifies the linearized equation. The resulting system of linear equations contains only three functional parameters $u_{n-1}$, $u_n$ and $v_n$  
\begin{equation}
\begin{aligned}\label{RTL_lin1}
&P_{n,t}=u_{n-1}P_{n-1}+(u_n-u_{n-1})P_n-u_nP_{n+1}+v_n(Q_n-Q_{n+1}),\\
&Q_{n,t}=u_{n-1}(P_{n-1}-P_{n}),
\end{aligned} 
\end{equation}
meanwhile system \eqref{RTL_lin0} contains extra variables $u_{n+1}$, $v_{n+1}$. Due to this simplification we can assume that the desired invariant manifold is of the form
\begin{equation}
\begin{aligned}\label{RTL_oim1}
&P_{n+1}=F\left(P_n,Q_n,u_n,v_n,u_{n-1}\right),\\
&Q_{n+1}=G\left(P_n,Q_n,u_n,v_n,u_{n-1}\right),
\end{aligned} 
\end{equation}
i.e. it might depend on the same set of the lowercase variables.

We search the functions $F$ and $G$ using the compatibility condition for the systems of equations \eqref{RTL_lin1} and \eqref{RTL_oim1}. We omit the calculations and give only the answer (see the Appendix for calculations). 

Thus, system of equations \eqref{RTL_oim1} is compatible with the linearized equation \eqref{RTL_lin1}  if and only if it is of the form
\begin{equation}
\begin{aligned}\label{RTL_oim}
&P_{n+1}=\frac{\left(\lambda Q_n+(v_n-\lambda)P_n\right)^2}{\lambda u_nP_n}-\frac{C}{u_nP_n},\\
&Q_{n+1}=-\frac{v_n-\lambda}{\lambda}P_n-Q_n,
\end{aligned} 
\end{equation}
where $\lambda$, $C$ are arbitrary constants. By virtue of relations \eqref{RTL_oim}, equation \eqref{RTL_lin1} can be reduced to the form:
\begin{equation}
\begin{aligned}\label{RTL_lin}
&P_{n,t}=(u_n-u_{n-1}+v_n-\lambda)P_n+2\lambda Q_n,\\
&Q_{n,t}=-u_{n-1}P_n+\lambda \frac{Q^2_n}{P_n}-\frac{C}{P_n}.
\end{aligned} 
\end{equation}
\begin{theorem}\label{ThRTL}
Systems of equations \eqref{RTL_oim} and \eqref{RTL_lin} are compatible if and only if the functions $u_n$ and $v_n$ solve the system  \eqref{RTL}.
\end{theorem}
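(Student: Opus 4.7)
The plan is to verify the compatibility condition symbolically and then, by varying the still-free variables, recover the Ruijsenaars--Toda system. Concretely, I would compute $P_{n+1,t}$ and $Q_{n+1,t}$ in two different ways. On one hand, apply the shift $D_n$ to \eqref{RTL_lin} to get
\[
P_{n+1,t}=(u_{n+1}-u_{n}+v_{n+1}-\lambda)P_{n+1}+2\lambda Q_{n+1},\qquad Q_{n+1,t}=-u_{n}P_{n+1}+\lambda\frac{Q_{n+1}^{2}}{P_{n+1}}-\frac{C}{P_{n+1}},
\]
and then eliminate $P_{n+1},Q_{n+1}$ via \eqref{RTL_oim}. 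On the other hand, differentiate the right-hand sides of \eqref{RTL_oim} in $t$, substituting $P_{n,t}$ and $Q_{n,t}$ from \eqref{RTL_lin} and leaving $u_{n,t},v_{n,t}$ symbolic. Equating the two expressions produces two identities which, after clearing the denominator $\lambda u_n P_n^{2}$, are polynomial in $P_n,Q_n$ (with coefficients rational in the parameters $\lambda,C$).

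For the ``only if'' direction I would treat $P_n$ and $Q_n$ as algebraically independent, justified by the fact that \eqref{RTL_lin} admits, for a fixed $n$, solutions with arbitrary initial data $(P_n(t_0),Q_n(t_0))$; I would moreover insist that the identities hold for arbitrary values of $\lambda$ and $C$, since these are free parameters of the GIM. Extracting the coefficient of each monomial $P_n^{a}Q_n^{b}\lambda^{c}C^{d}$ then yields a finite system of algebraic relations among $u_{n-1},u_n,u_{n+1},v_n,v_{n+1}$ and the derivatives $u_{n,t},v_{n,t}$. I would handle the second (linear) equation of \eqref{RTL_oim} first, because its $Q$-compatibility already isolates $v_{n,t}$ and fixes the second equation of \eqref{RTL}; the first (quadratic) equation then reduces modulo the relation just found, and its $P$-compatibility supplies the first equation of \eqref{RTL}. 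The ``if'' direction is the verification that, once \eqref{RTL} is imposed, every residual coefficient vanishes identically.

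The main obstacle is bookkeeping rather than ideas. After clearing denominators the $P$-identity has degree up to four in $P_n,Q_n$, and one must carefully distinguish contributions of three different origins: the shifts $u_{n+1},v_{n+1}$ entering through the shifted \eqref{RTL_lin}, the $t$-derivatives $u_{n,t},v_{n,t}$ introduced by differentiating \eqref{RTL_oim}, and the nearest-neighbor variable $u_{n-1}$ already present in \eqref{RTL_lin}. Introducing the shorthand $R_n:=\lambda Q_n+(v_n-\lambda)P_n$, so that $\lambda u_n P_n P_{n+1}=R_n^{2}-\lambda C$ and $Q_{n+1}=-R_n/\lambda$, compresses the manipulations considerably and exposes the structural cancellations that make the final answer come out as exactly \eqref{RTL}.
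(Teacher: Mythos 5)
Your proposal is correct and follows essentially the same route as the paper: the compatibility condition is exactly the cross-derivative identity $\frac{d}{dt}(D_nP_n,D_nQ_n)=D_n(P_{n,t},Q_{n,t})$ evaluated modulo \eqref{RTL_oim} and \eqref{RTL_lin}, with $P_n$, $Q_n$ treated as independent dynamical variables, which is precisely the computation the paper performs (the Appendix carries out the harder inverse version of it to find the GIM, and the theorem itself is the direct verification you describe). Your observation that the linear equation of \eqref{RTL_oim} isolates $v_{n,t}=v_n(u_{n-1}-u_n)$ first, after which the quadratic one yields the remaining equation of \eqref{RTL}, is exactly how the cancellations organize; the only superfluous element is insisting on coefficient extraction in $\lambda$ and $C$, since for generic fixed parameters the identities in $P_n,Q_n$ alone already force \eqref{RTL}.
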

Theorem \ref{ThRTL} implies that a pair of equations \eqref{RTL_oim} and \eqref{RTL_lin} can be interpreted as a nonlinear Lax pair for the equation \eqref{RTL}.

\section{Constructing of particular solutions via generalized invariant manifolds}

Let us investigate the behavior of the system \eqref{RTL_lin} in the vicinity of its singular point $ \lambda = \infty $. For this purpose, we construct a formal asymptotic expansion of the solution in a series in negative powers of the parameter $\lambda$: 
\begin{equation}\label{PQlambda}
P_n=\sum_{j=0}^\infty \alpha_j\lambda^{-j}, \qquad Q_n=\sum_{j=0}^\infty \beta_j\lambda^{-j},
\end{equation}
where $\alpha_j=\alpha_j(n,t)$ and $\beta_j=\beta_j(n,t)$.
We assume that parameter $C$ is also represented as a formal series
\begin{equation}\label{Clambda}
C(\lambda)=\sum_{j=-1}^\infty c_j\lambda^{-j},
\end{equation}
with constant coefficients $c_j$. By substituting  the series  \eqref{PQlambda} and \eqref{Clambda} into the first equation of system \eqref{RTL_lin} and by comparing the coefficients at the powers of $\lambda$ we obtain a sequence of the equations for determining unknown coefficients $\alpha_j$ and $\beta_j$ 
\begin{equation}
\begin{aligned}\label{coeff1}
&\lambda:  \qquad \alpha_0-2\beta_0=0,\\
&\lambda^0:  \quad \, \, \, \dot{\alpha}_0=(u_n-u_{n-1}+v_n)\alpha_0-\alpha_1+2\beta_1,\\
&\lambda^{-1}: \quad \dot{\alpha}_1=(u_n-u_{n-1}+v_n)\alpha_1-\alpha_2+2\beta_2,\\
&\lambda^{-2}: \quad \dot{\alpha}_2=(u_n-u_{n-1}+v_n)\alpha_2-\alpha_3+2\beta_3,\\
& \ldots\ldots\ldots\ldots.
\end{aligned} 
\end{equation}
Similarly from the second equation in \eqref{RTL_lin} we get
\begin{equation}
\begin{aligned}\label{coeff2}
&\lambda:  \qquad \beta_0^2=-c_{-1},\\
&\lambda^0:  \quad \, \, \, \alpha_0\dot{\beta}_0=-u_{n-1}\alpha_0^2+2\beta_0\beta_1+c_0,\\
&\lambda^{-1}: \quad \alpha_1\dot{\beta}_0+\alpha_0\dot{\beta}_1=-2u_{n-1}\alpha_0\alpha_1+\beta_1^2+2\beta_0\beta_2+c_1,\\
&\lambda^{-2}: \quad \alpha_2\dot{\beta}_0+\alpha_1\dot{\beta}_1+\alpha_0\dot{\beta}_2=-u_{n-1}\left(2\alpha_0\alpha_2+\alpha_1^2\right)+2\left(\beta_1\beta_2+\beta_0\beta_3\right),\\
& \ldots\ldots\ldots\ldots.
\end{aligned} 
\end{equation}
Several first equations from the systems \eqref{coeff1} and \eqref{coeff2} immediately imply
\begin{align*}
&\alpha_0=2\beta_0, \qquad \beta_0=\sqrt{-c_{-1}},\\
&\alpha_1=(u_n-u_{n-1}+v_n)\alpha_0+2\beta_1, \qquad \beta_1=\frac{1}{2\beta_0}(u_{n-1}\alpha_0^2-c_0),\\
&\alpha_2=(u_n-u_{n-1}+v_n)\alpha_1+2\beta_2-\dot{\alpha}_1, \qquad \beta_2=\frac{1}{2\beta_0}(\alpha_0\dot{\beta}_1+2u_{n-1}\alpha_0\alpha_1-\beta_1^2-c_1).
\end{align*}
Obviously, for $j=0,1,2$ the coefficients $\alpha_j$, $\beta_j$ are polynomials in the variables $u_n$, $u_{n-1}$, $v_n$ and  their derivatives with respect to $t$. The next statement is easily proved by induction.
\begin{proposition}
For any $j\geq 0$ coefficients $\alpha_j$ and $\beta_j$ of the formal series \eqref{PQlambda} are polynomials in the variables  $u_n$, $u_{n-1}$, $v_n$ and  their derivatives with respect to $t$:
\begin{align*}
&\alpha_j=f^{(1)}_j\left(u_n,u_{n-1},v_n,\dot{u}_n,\dot{u}_{n-1},\dot{v}_n,\ldots\right),\\
&\beta_j=f^{(2)}_j\left(u_n,u_{n-1},v_n,\dot{u}_n,\dot{u}_{n-1},\dot{v}_n,\ldots\right).
\end{align*}
\end{proposition}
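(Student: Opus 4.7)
The plan is to proceed by induction on $j$, exploiting the triangular structure of the systems \eqref{coeff1} and \eqref{coeff2}. The base case $j=0$ is immediate: the coefficient of $\lambda$ in \eqref{coeff2} forces $\beta_0^2=-c_{-1}$, so $\beta_0$ is a nonzero constant, and then $\alpha_0=2\beta_0$ is also a constant, hence trivially a polynomial in the listed variables. The explicit formulas for $j=1,2$ already displayed in the text verify the claim for those indices as well.

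For the inductive step I would assume that $\alpha_i$ and $\beta_i$ are polynomials in $u_n, u_{n-1}, v_n$ and their $t$-derivatives for all $i\le j$, and then extract $\beta_{j+1}$ from the $\lambda^{-j}$-coefficient of the second equation of \eqref{RTL_lin}. Following the pattern of \eqref{coeff2}, this coefficient has the general form
\[
\sum_{i+l=j}\alpha_i\,\dot\beta_l \;=\; -u_{n-1}\sum_{i+l=j}\alpha_i\alpha_l \;+\; 2\beta_0\beta_{j+1} \;+\; \sum_{i=1}^{j}\beta_i\beta_{j+1-i} \;+\; c_j,
\]
in which $\beta_{j+1}$ enters linearly with the nonzero constant coefficient $2\beta_0$. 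Solving for $\beta_{j+1}$ therefore involves only division by the constant $\beta_0$, and every other term is, by the induction hypothesis, a polynomial in the prescribed variables. Here one also uses that the ring of polynomials in $u_n, u_{n-1}, v_n$ and their $t$-derivatives is closed under $\partial_t$, because $\partial_t$ sends each generator $u_n^{(m)}$ to the generator $u_n^{(m+1)}$. Hence $\beta_{j+1}$ lies in this ring. The companion $\alpha_{j+1}$ is then produced from the $\lambda^{-j}$-coefficient of the first equation of \eqref{RTL_lin}, which by \eqref{coeff1} reads $\alpha_{j+1}=(u_n-u_{n-1}+v_n)\alpha_j-\dot\alpha_j+2\beta_{j+1}$, and is again a polynomial by the same closure argument.

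The only point requiring real care is the verification that at each order $\lambda^{-j}$ the coefficient equation from \eqref{coeff2} actually contains $\beta_{j+1}$ as its unique higher-index unknown, carrying the constant coefficient $2\beta_0$; this triangularity is what permits the induction to close. I do not expect a substantive obstacle beyond this bookkeeping: the proposition is essentially a formal consequence of the algebraic structure of the recurrence \eqref{coeff1}--\eqref{coeff2}, together with the fact that $\beta_0$ is a nonzero constant so that no non-polynomial denominators are ever introduced.
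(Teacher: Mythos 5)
Your proof is correct and takes essentially the same route as the paper, which simply computes $\alpha_j,\beta_j$ explicitly for $j\le 2$ from the recurrences \eqref{coeff1}--\eqref{coeff2} and states that the general case is "easily proved by induction." The induction you spell out --- extracting $\beta_{j+1}$ from the $\lambda^{-j}$ coefficient of the second equation, where it appears linearly with the nonzero constant coefficient $2\beta_0$, and then $\alpha_{j+1}$ from the first --- is exactly the intended argument, just written out in full.
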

Let us now substitute the formal series \eqref{PQlambda} and \eqref{Clambda} into the system \eqref{RTL_oim} and derive two more sequences of the equations. Namely from the second equation in \eqref{RTL_oim} we obtain
\begin{equation}
\begin{aligned}\label{coeff3}
&\bar{\beta}_0=\alpha_0-\beta_0,\\
&\bar{\beta}_1=\alpha_1-\beta_1-\alpha_0v_n,\\
&\bar{\beta}_2=\alpha_2-\beta_2-\alpha_1v_n,\\
&\ldots\ldots\ldots,\\
&\bar{\beta}_j=\alpha_j-\beta_j-\alpha_{j-1}v_n.
\end{aligned} 
\end{equation}
Here the notation $\bar{\beta}_j=D_n\beta_j$ is accepted where $D_n$ stands for the operator shifting the discrete argument $n$ such that $D_nf(n)=f(n+1)$. The first equation in \eqref{RTL_oim} implies
\begin{equation}
\begin{aligned}\label{coeff4}
&\left(\beta_0-\alpha_0\right)^2+c_{-1}=0,\\
&u_n\bar{\alpha}_0\alpha_0=2(\beta_0-\alpha_0)(\beta_1-\alpha_1)+2v_n(\beta_0-\alpha_0)+c_0,\\
&u_n(\bar{\alpha}_0\alpha_1+\bar{\alpha}_1\alpha_0)=\left(\beta_1-\alpha_1\right)^2+2(\beta_0-\alpha_0)(\beta_2-\alpha_2)\\
&\qquad \qquad \qquad \qquad +2v_n\left((\beta_0-\alpha_0)\alpha_1+\left(\beta_1-\alpha_1\right)\alpha_0\right)+\alpha_0^2v_n^2+c_1,\\
& \ldots\ldots\ldots\ldots
\end{aligned} 
\end{equation}
From \eqref{coeff3} and \eqref{coeff4} one can easily observe that the coefficients $\alpha_0$ and $\beta_0$ do not depend on  $n$. Further analysis shows that
\begin{equation*}
\beta_1=\alpha_0u_{n-1}+(1-\alpha_0)v_{n-1}-\frac{c_0}{\alpha_0}
\end{equation*}
and
\begin{equation*}
\alpha_1=\alpha_0v_{n}+\alpha_0(u_{n-1}+u_n)+(1-\alpha_0)(v_n+v_{n-1}).
\end{equation*}
It is easy to prove that due to the relations \eqref{coeff3} and \eqref{coeff4} all the coefficients $ \alpha_j $, $ \beta_j $ can be  expressed in terms of the variables $ u_n $, $ v_n $ and their shifts.
In other words, the following statement is true
\begin{proposition}
The coefficients $\alpha_j$ and $\beta_j$ of the series \eqref{PQlambda} are polynomials in the variables  $u_n$, $v_n$ and their shifts:
\begin{align*}
&\alpha_j=g^{(1)}_j\left(u_n,v_n,u_{n\pm 1},v_{n\pm 1},\ldots\right),\\
&\beta_j=g^{(2)}_j\left(u_n,v_n,u_{n\pm 1},v_{n\pm 1},\ldots\right).
\end{align*}
\end{proposition}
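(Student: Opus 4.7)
The plan is to argue by induction on $j$, using only the shift relations \eqref{coeff3} and \eqref{coeff4}, so that no $t$-derivatives will appear. The base cases $j = 0, 1$ are established in the display immediately preceding the proposition: $\alpha_0, \beta_0$ are constants, while $\alpha_1, \beta_1$ have been exhibited as polynomials in $u_{n-1}, u_n, v_{n-1}, v_n$. The inductive hypothesis will be that, for all $k \le j$, the coefficients $\alpha_k, \beta_k$ are polynomials in $u_n, v_n$ and their shifts.

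For the inductive step I will first extract the coefficient of $\lambda^{-(j+1)}$ from the identity $(\lambda Q_n + (v_n - \lambda) P_n)^2 - \lambda C = \lambda u_n P_n P_{n+1}$ (i.e., the $(j{+}2)$-nd line of \eqref{coeff4}). Its schematic form is
\begin{equation*}
2(\beta_0 - \alpha_0)\bigl(\beta_{j+1} - \alpha_{j+1}\bigr) = u_n\!\!\sum_{i+l=j+1}\!\!\alpha_i \bar{\alpha}_l - (\text{lower-order terms}) - c_{j+1},
\end{equation*}
where the "lower-order terms" depend polynomially on $\alpha_k, \beta_k, \bar\alpha_k, \bar\beta_k$ for $k \le j$, on $u_n, v_n$, and on the constants $\alpha_0, \beta_0, c_0, \ldots, c_j$. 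The new unknown $\delta_{j+1} := \beta_{j+1} - \alpha_{j+1}$ enters linearly with nonzero coefficient $2(\beta_0 - \alpha_0) = -2\beta_0 \neq 0$ (using $c_{-1} \neq 0$ and $\beta_0 = \sqrt{-c_{-1}}$). Solving this relation expresses $\delta_{j+1}$ as a polynomial in shifts of $u_n, v_n$ by the inductive hypothesis.

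Next I will invoke \eqref{coeff3} at index $j+1$: $\bar\beta_{j+1} = \alpha_{j+1} - \beta_{j+1} - \alpha_j v_n = -\delta_{j+1} - \alpha_j v_n$, equivalently $\beta_{j+1}(n+1) = -\delta_{j+1}(n) - \alpha_j(n)\, v_n$. Shifting $n \mapsto n-1$ gives
\begin{equation*}
\beta_{j+1}(n) = -\delta_{j+1}(n-1) - \alpha_j(n-1)\, v_{n-1},
\end{equation*}
which is polynomial in shifts of $u_n, v_n$. Then $\alpha_{j+1} = \beta_{j+1} - \delta_{j+1}$ is as well, closing the induction.

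The main obstacle will be justifying in detail the linear-in-$\delta_{j+1}$ structure of the generic equation in \eqref{coeff4}. This amounts to organizing the Cauchy products arising in the expansion of $\bigl(\lambda(Q_n - P_n) + v_n P_n\bigr)^2$ and checking that, at the power $\lambda^{-(j+1)}$, the only contribution carrying the new index $j{+}1$ is the single cross term $2\delta_0\, \delta_{j+1}$ from $\lambda^2(Q_n - P_n)^2$; all other occurrences in the middle and low-order terms $2\lambda v_n P_n (Q_n-P_n)$, $v_n^2 P_n^2$, and $\lambda u_n P_n P_{n+1}$ involve only pairs of indices whose sum is at most $j$, hence are already polynomial in the shifts by the inductive hypothesis. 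Once this index-counting bookkeeping is carried out, the induction above proceeds routinely.
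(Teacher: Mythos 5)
Your strategy is exactly the one the paper has in mind (it dismisses the proof as "easy" from the relations \eqref{coeff3} and \eqref{coeff4} and only works out $j=0,1$): induct on $j$, use the hierarchy \eqref{coeff4} to determine the combination $\delta_{j+1}:=\beta_{j+1}-\alpha_{j+1}$, which enters linearly with the nonzero constant coefficient $2\delta_0=-2\beta_0$, and then use \eqref{coeff3} shifted down by one step in $n$ to recover first $\beta_{j+1}$ and then $\alpha_{j+1}$; this two-step recipe is precisely how the paper's explicit formulas for $\beta_1$ and $\alpha_1$ arise, so the approach matches. The one thing to repair is your power-of-$\lambda$ bookkeeping. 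In the identity $\lambda u_nP_nP_{n+1}=\bigl(\lambda(Q_n-P_n)+v_nP_n\bigr)^2-\lambda C$ the cross term $2\delta_0\delta_{j+1}$ occurs in the coefficient of $\lambda^{1-j}$ (which is indeed the $(j{+}2)$-nd line of \eqref{coeff4}, as your parenthetical correctly identifies: line $2$ gives $\delta_1$, line $3$ gives $\delta_2$, and so on), and there the left-hand side is $u_n\sum_{i+l=j}\alpha_i\bar\alpha_l$ and the constant is $c_j$. The display you wrote, with $u_n\sum_{i+l=j+1}\alpha_i\bar\alpha_l$ and $c_{j+1}$ set against $2\delta_0\delta_{j+1}$, mixes coefficients of two different powers of $\lambda$ and, read literally, places the unknowns $\alpha_{j+1}$ and $\bar\alpha_{j+1}$ on the right-hand side, which would make the recursion circular. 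With the corrected exponent every term other than $2\delta_0\delta_{j+1}$ carries indices at most $j$ (the $\bar\alpha_l$, $l\le j$, being forward shifts of quantities already handled by the inductive hypothesis, remain polynomials in the shifts of $u_n$, $v_n$), and the induction closes as you describe.
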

The following important fact follows from Propositions 4.1 and 4.2.
\begin{proposition}
If the formal series \eqref{PQlambda} are terminated at $j=N$, i.e. the conditions hold $\alpha_j\equiv 0$, $\beta_j\equiv 0$ for $j\geq N$, then the corresponding solution $u(n,t)$, $v(n,t)$ of the lattice \eqref{RTL} solves the system of the ordinary differential equations:
\begin{equation}
\begin{aligned}\label{fN}
&f^{(1)}_N\left(u_n,u_{n-1},v_n,\dot{u}_n,\dot{u}_{n-1},\dot{v}_n,\ldots\right)=0,\\
&f^{(2)}_N\left(u_n,u_{n-1},v_n,\dot{u}_n,\dot{u}_{n-1},\dot{v}_n,\ldots\right)=0
\end{aligned}
\end{equation}
and simultaneously the system of the ordinary difference equations:
\begin{equation}
\begin{aligned}\label{gN}
&g^{(1)}_N\left(u_n,v_n,u_{n\pm 1},v_{n\pm 1},\ldots\right)=0,\\
&g^{(2)}_N\left(u_n,v_n,u_{n\pm 1},v_{n\pm 1},\ldots\right)=0.
\end{aligned}
\end{equation}
\end{proposition}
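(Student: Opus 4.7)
The plan is to read Propositions 4.1 and 4.2 as furnishing two different polynomial representations of one and the same pair of series coefficients. On any solution $(u_n(t),v_n(t))$ of \eqref{RTL}, Proposition 4.1 writes each $\alpha_j,\beta_j$ as a polynomial $f^{(i)}_j$ in the differential jet $(u_n,u_{n-1},v_n,\dot u_n,\dot u_{n-1},\dot v_n,\ldots)$, while Proposition 4.2 writes the same $\alpha_j,\beta_j$ as a polynomial $g^{(i)}_j$ in the discrete jet $(u_n,v_n,u_{n\pm 1},v_{n\pm 1},\ldots)$. The passage from one representation to the other uses \eqref{RTL} itself to trade $t$-derivatives for $n$-shifted values, so on the trajectory the two polynomial expressions take the same numerical value at every $(n,t)$.

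With this in hand, the conclusion is almost immediate. Under the termination hypothesis $\alpha_N\equiv 0$ and $\beta_N\equiv 0$, substituting the differential representation yields $f^{(1)}_N=0$ and $f^{(2)}_N=0$ identically in $(n,t)$, which is precisely \eqref{fN}; substituting the discrete representation yields $g^{(1)}_N=0$ and $g^{(2)}_N=0$ identically, which is precisely \eqref{gN}. Both systems must hold simultaneously because a single scalar quantity vanishing on the trajectory cannot do so in one set of coordinates and not in another.

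The one point that requires care, and which I regard as the main subtlety, is the claim that the $f$- and $g$-representations really describe the \emph{same} coefficients $\alpha_j,\beta_j$ rather than two similar sequences generated by unrelated recursions. To secure this I would invoke Theorem \ref{ThRTL}: on a solution of \eqref{RTL} the systems \eqref{RTL_lin} and \eqref{RTL_oim} are compatible, so the formal solution $(P_n,Q_n)$ determined by fixing the constants $c_j$ together with the leading data $\alpha_0=2\beta_0=2\sqrt{-c_{-1}}$ is uniquely defined, and both the differential recursion derived from \eqref{RTL_lin} and the discrete recursion derived from \eqref{RTL_oim} must therefore reproduce this same sequence of coefficients. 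Once this identification is established, the proof reduces to the tautology used in the previous paragraph.
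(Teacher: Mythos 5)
Your proposal is correct and follows essentially the same route as the paper, which justifies the proposition in one line as an immediate consequence of Propositions 4.1 and 4.2: the termination conditions $\alpha_N\equiv 0$, $\beta_N\equiv 0$ read as \eqref{fN} in the differential representation and as \eqref{gN} in the discrete one. Your added remark that both representations describe the same coefficients because $(P_n,Q_n)$ is a joint solution of the compatible pair \eqref{RTL_oim}, \eqref{RTL_lin} is exactly the point the paper leaves implicit.
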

Note that the equalities \eqref{fN} and \eqref{gN} define some invariant manifolds for the equation \eqref{RTL}. 

\section{Derivation of the continuous and discrete Dubrovin type equations}

The system of equations \eqref{fN}, \eqref{gN} is written in an absolutely implicit form. It is convenient to rewrite it using the following technique. Since now series \eqref{PQlambda}, \eqref{Clambda} are polynomials in $\lambda$, it is possible to use as functional parameters not the coefficients $\alpha_i$ and $\beta_i$, but the roots of these polynomials (see, for instance, \cite{DubrovinMatveevNovikov}).

Let us change the variables in the systems \eqref{RTL_oim} and \eqref{RTL_lin} by setting \linebreak $\lambda Q_n=R_n$, $\lambda C=c$ as a result we obtain
\begin{equation}
\begin{aligned}\label{RTL_oim2}
&\lambda u_nP_nP_{n+1}=\left(R_n+(v_n-\lambda)P_n\right)^2-c,\\
&R_{n+1}=-(v_n-\lambda)P_n-R_n,
\end{aligned} 
\end{equation}
\begin{equation}
\begin{aligned}\label{RTL_lin2}
&\dot{P}_n=(u_n-u_{n-1}+v_n-\lambda)P_n+2 R_n,\\
&P_n\dot{R}_n=-\lambda u_{n-1}P^2_n+R^2_n-c.
\end{aligned} 
\end{equation}
We assume that system \eqref{RTL_oim2}, \eqref{RTL_lin2} has a solution polynomially depending on the parameter~$\lambda$: 
\begin{equation}\label{PQ_polinom}
P_n(\lambda)=2\prod_{i=1}^N(\lambda-\gamma_n^{(i)}), \qquad R_n(\lambda)=\prod_{i=1}^{N+1}(\lambda-\rho_n^{(i)}),
\end{equation}
where $\gamma_n^{(i)}=\gamma^{(i)}(n,t)$, $\rho_n^{(i)}=\rho^{(i)}(n,t)$.

As it was established earlier, in this case the solution of the nonlinear lattice \eqref{RTL} satisfies some ordinary differential and ordinary discrete equations, which are expressed in a complicated way in terms of the functions $u(n,t)$, $v(n,t)$ and their derivatives and shifts. It is convenient to rewrite these equations in terms of the roots $\gamma_n^{(j)}$ and $\rho_n^{(j)}$ of the polynomials \eqref{PQ_polinom}. In this case, function $c=c(\lambda)$ should also be a polynomial. It is convenient to set it in the form of a product
\begin{equation}\label{C_polinom}
c(\lambda)=\prod_{i=1}^{2N+2}(\lambda-\lambda_i)=\nu^2(\lambda).
\end{equation}
Note that formula \eqref{C_polinom} defines a hyperelliptic curve of some algebro-geometric solution of the equation \eqref{RTL}.

Let us substitute products \eqref{PQ_polinom} and \eqref{C_polinom} into equations \eqref{RTL_oim2}, \eqref{RTL_lin2}  and get:
\begin{equation}
\begin{aligned}\label{RTL_Oimpolinom}
&\prod_{i=1}^{N+1}(\lambda-\rho_n^{(i)})+2(v_n-\lambda)\prod_{i=1}^N(\lambda-\gamma_n^{(i)})=\varepsilon\sqrt{4\lambda u_n\prod_{i=1}^N(\lambda-\gamma_n^{(i)})(\lambda-\gamma_{n+1}^{(i)})+\nu^2(\lambda)},\\
&\prod_{i=1}^{N+1}(\lambda-\rho_{n+1}^{(i)})+2(v_n-\lambda)\prod_{i=1}^N(\lambda-\gamma_n^{(i)})+\prod_{i=1}^{N+1}(\lambda-\rho_n^{(i)})=0,
\end{aligned} 
\end{equation}
\begin{equation}
\begin{aligned}\label{RTL_Linpolinom}
&-\sum_{j=1}^N \dot{\gamma}_n^{(j)} \prod_{i\neq j}^N(\lambda-\gamma_n^{(i)})=(u_n-u_{n-1}+v_n-\lambda)\prod_{i=1}^N(\lambda-\gamma_n^{(i)})+\prod_{i=1}^{N+1}(\lambda-\rho_n^{(i)}),\\
&\prod_{i=1}^N(\lambda-\gamma_n^{(i)})\left[\sum_{j=1}^{N+1}\dot{\rho}_n^{(j)}\prod_{i\neq j}^{N+1}(\lambda-\rho_n^{(i)})\right]=2\lambda u_{n-1}\prod_{i=1}^N(\lambda-\gamma_n^{(i)})^2 \\
& \qquad \qquad \qquad \qquad \qquad \qquad \qquad \qquad \quad -\frac{1}{2}\prod_{i=1}^{N+1}(\lambda-\rho_{n+1}^{(i)})^2+\frac{1}{2}\nu^2(\lambda).
\end{aligned} 
\end{equation}
Here for the simplicity we use notations $\gamma_{n+1}^{(i)}=\gamma^{(i)}(n+1,t)$, $\rho_{n+1}^{(i)}=\rho^{(i)}(n+1,t)$, \linebreak $\dot{\gamma}_n^{(j)}=\frac{d}{dt}\gamma^{(j)}(n,t)$, $\dot{\rho}_n^{(j)}=\frac{d}{dt}\rho^{(j)}(n,t)$, parameter $\varepsilon$ takes any of two values $\pm 1$. 

Equating the coefficients at the two highest powers of $\lambda$, we find the relations
\begin{align}
&u_n=-\frac{1}{2}\sum_{i=1}^N\rho_{n+1}^{(i)}+\frac{1}{4}\sum_{i=1}^{2N+2}\lambda_i, \label{un}\\
&v_n=\frac{1}{2}\sum_{i=1}^N\left(\rho_{n+1}^{(i)}+\rho_n^{(i)}\right)-\sum_{i=1}^{N}\gamma_n^{(i)}.\label{vn}
\end{align} 
Afterwards by taking $\lambda=0$ in \eqref{RTL_Oimpolinom} we get two relations
\begin{align}
(-1)^{N+1}\prod_{i=1}^{N+1}\rho_n^{(i)}=-\varepsilon \nu(0), \qquad \quad
v_n=(-1)^N\frac{\varepsilon \nu(0)}{\prod_{i=1}^N\gamma_n^{(i)}}.\label{vn2}
\end{align} 
Then due to \eqref{vn} and \eqref{vn2} we obtain:
\begin{align}\label{sumro}
\sum_{i=1}^{N+1}(\rho_{n+1}^{(i)}+\rho_n^{(i)})=2\sum_{i=1}^{N}\gamma_n^{(i)}+2(-1)^N\frac{\varepsilon \nu(0)}{\prod_{i=1}^N\gamma_n^{(i)}}.
\end{align} 
Let's set now $\lambda=\gamma_n^{(j)}$ in \eqref{RTL_Oimpolinom} and deduce equations relating roots $\gamma_n^{(j)}$, $\rho_n^{(j)}$ and $\rho_{n+1}^{(j)}$:
\begin{equation}
\begin{aligned}\label{RTL_oim_lam_gj}
&\prod_{i=1}^{N+1}(\gamma_n^{(j)}-\rho_n^{(i)})=\varepsilon\nu(\gamma_n^{(j)}),\\
&\prod_{i=1}^{N+1}(\gamma_n^{(j)}-\rho_{n+1}^{(i)})+\prod_{i=1}^{N+1}(\gamma_n^{(j)}-\rho_n^{(i)})=0.
\end{aligned} 
\end{equation}
From the first equation \eqref{RTL_Linpolinom} by imposing $\lambda=\gamma_n^{(j)}$ and transforming due to the first equation in  \eqref{RTL_oim_lam_gj} we derive a system of the first order  differential equations for the variables $\gamma_n^{(j)}$:
\begin{align}\label{dotgj}
\dot{\gamma}_n^{(j)} =-\frac{\varepsilon\nu(\gamma_n^{(j)})}{\prod_{i\neq j}^N(\gamma_n^{(j)}-\gamma_n^{(i)})}, \quad j=1,2,\ldots,N,
\end{align} 
which are nothing else but the usual Dubrovin equations.

From equations \eqref{sumro} and \eqref{RTL_oim_lam_gj} one can derive the following system of discrete equations for $\gamma_n^{(j)}$:
\begin{align}
-\sum_{j=1}^N\left(\frac{\nu(\gamma_n^{(j)})}{\gamma_n^{(j)}(\gamma_n^{(j)}-\gamma_{n+1}^{(s)})\prod_{k\neq j}^{N}(\gamma_n^{(j)}-\gamma_n^{(k)})}\right)&+\frac{\nu(\gamma_{n+1}^{(s)})}{\gamma_{n+1}^{(s)}\prod_{k=1}^{N}(\gamma_{n+1}^{(s)}-\gamma_n^{(k)})} \nonumber\\
&+(-1)^N\frac{\nu(0)}{\gamma_{n+1}^{(s)}\prod_{i=1}^N \gamma_n^{(i)}}=\varepsilon, \label{gjnp1}
\end{align} 
where $s=1,2,\ldots,N$. Note that the derivation of these equations is not trivial; it requires ingenuity and some effort.

System of $N$ first order difference equations  \eqref{gjnp1} can be interpreted as a discrete version of the Dubrovin equations for the lattice \eqref{RTL}. 

Finally, due to the equations \eqref{sumro} and \eqref{RTL_oim_lam_gj}, we find an expression for the sought function  $u_n$ (see above \eqref{un}) in terms of the variables $\gamma_n^{(j)}$: 
\begin{align}\label{un2}
u_n=-\frac{1}{2}\left(\sum_{i=1}^N\gamma_n^{(i)}+\sum_{i=1}^N\frac{\varepsilon \nu(\gamma_n^{(i)})}{\gamma_n^{(i)} \prod_{k\neq i}^N(\gamma_n^{(i)}-\gamma_n^{(k)})}+(-1)^N\frac{\varepsilon \nu(0)}{\prod_{i=1}^N \gamma_n^{(i)}}\right)+\frac{1}{4}\sum_{i=1}^{2N+2}\lambda_i.
\end{align}

System of equations \eqref{dotgj} admits integrals of the form (cf. \cite{DubrovinMatveevNovikov}):
\begin{equation}
\begin{aligned}
& \sum^N_{j=1}\int_{\gamma^{(j)}(n,0)}^{\gamma^{(j)}(n,t)}\gamma^k\frac{d \gamma}{\nu(\gamma)}=0, \qquad k=0,1,2,\ldots,N-2,\\
& \sum^N_{j=1}\int_{\gamma^{(j)}(n,0)}^{\gamma^{(j)}(n,t)}\gamma^{N-1}\frac{d \gamma}{\nu(\gamma)}=-\varepsilon t,
\end{aligned}
\end{equation}
that are derived directly from the system by simple transformations and integration. Dependance of the functions $\gamma^{(j)}(n,t)$ on the variables $n$ is due to the discrete system \eqref{gjnp1}.

\section{Particular solutions}

Let us assume that $N=1$ in the formulas \eqref{PQ_polinom}, \eqref{C_polinom}. Then solution $(u_n,v_n)$ to the lattice \eqref{RTL}
is found as follows
\begin{align}
& u_n=-\frac{1}{2}\left(\gamma_n+\frac{\varepsilon \nu(\gamma_n)}{\gamma_n}-\frac{\varepsilon\nu(0)}{\gamma_n}\right)+\frac{1}{4}\sum^{4}_{i=1}\lambda_i, \label{unN1}\\
& v_n=-\frac{\varepsilon\nu(0)}{\gamma_n}, \label{vnN1}
\end{align}
where $\gamma_n$ solves the corresponding continuous and discrete Dubrovin type equations 
\begin{equation}
\begin{aligned}\label{dotgjN1}
& \dot{\gamma}_n=-\varepsilon\nu(\gamma_n), \\
& -\frac{\nu(\gamma_n)}{\gamma_n(\gamma_n-\gamma_{n+1})}+\frac{\nu(\gamma_{n+1})}{\gamma_{n+1}(\gamma_{n+1}-\gamma_n)}-\frac{\nu(0)}{\gamma_n\gamma_{n+1}}=\varepsilon.
\end{aligned} 
\end{equation}
Here $\nu^2(\gamma_n)=\prod_{i=1}^{4}(\gamma_n-\lambda_i)$.

The next statement is proved by a direct computation.
\begin{proposition} 
The overdetermined system of the equations \eqref{dotgjN1} is compatible.
\end{proposition}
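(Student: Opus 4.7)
The plan is to verify compatibility by differentiating the discrete equation with respect to $t$, substituting the continuous equation for $\dot{\gamma}_n$ and $\dot{\gamma}_{n+1}$, and showing that the resulting expression vanishes whenever the discrete equation itself holds. To organize the computation, set $a:=\gamma_n$, $b:=\gamma_{n+1}$, and $P(\gamma):=\nu^2(\gamma)=\prod_{i=1}^{4}(\gamma-\lambda_i)$. Multiplying the discrete equation in \eqref{dotgjN1} by $ab(a-b)$ yields the equivalent polynomial form
\[
\Phi(a,b):=b\,\nu(a)+a\,\nu(b)+(a-b)\nu(0)+\varepsilon\,ab(a-b)=0,
\]
and compatibility becomes the statement that $\dot{\Phi}\equiv 0$ modulo $\Phi=0$ together with the algebraic identities $\nu^2(a)=P(a)$, $\nu^2(b)=P(b)$, $\nu(0)^2=P(0)$.

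The calculation then proceeds as follows. From $\dot a=-\varepsilon\nu(a)$, $\dot b=-\varepsilon\nu(b)$, and $2\nu\nu'=P'$, one obtains $\dot{\nu}(a)=-\tfrac{\varepsilon}{2}P'(a)$ and analogously $\dot{\nu}(b)=-\tfrac{\varepsilon}{2}P'(b)$. Substituting these into $\dot{\Phi}$ produces an expression polynomial in $a,b,\nu(a),\nu(b),\nu(0)$ with coefficients polynomial in the $\lambda_i$. This expression decomposes, according to the square-root content of each monomial, into a purely polynomial part in $a,b$, parts linear in $\nu(a)$, $\nu(b)$, or $\nu(0)$, and a part containing the product $\nu(a)\nu(b)$. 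Since $\Phi=0$ is linear in $\nu(a)$, one can use it to solve $\nu(a)=-\bigl[a\nu(b)+(a-b)\nu(0)+\varepsilon ab(a-b)\bigr]/b$ and substitute into $\dot{\Phi}$; after using $\nu^2(b)=P(b)$, what remains should reduce to a vanishing polynomial identity in $a,b,\lambda_i,\nu(b)$.

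The main obstacle is the algebraic bookkeeping. Although no tool beyond elementary substitution and the explicit form of $P$ is required, the intermediate expressions are of moderately high degree and the $\lambda_i$ enter symmetrically through $P(0)=\prod_i\lambda_i$ and $P'(\gamma)$. A direct hand calculation is tedious but feasible, and a computer algebra system confirms the identity in a single step. Alternatively, one could give a conceptual explanation by invoking the compatibility of the underlying nonlinear Lax pair \eqref{RTL_oim2}--\eqref{RTL_lin2} combined with the polynomial ansatz \eqref{PQ_polinom} at $N=1$, but a direct polynomial verification is the most self-contained route and matches the authors' remark that the statement is proved by a direct computation.
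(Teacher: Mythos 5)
Your overall framework is the right one and is exactly what the paper means by ``a direct computation'': pass to the polynomial form $\Phi(a,b)=b\nu(a)+a\nu(b)+(a-b)\nu(0)+\varepsilon ab(a-b)=0$ (which is indeed what you get after multiplying the discrete equation by $ab(a-b)$; note it is equivalent to $\gamma_{n+1}=\lambda_1\lambda_2/(\lambda_1+\lambda_2-\gamma_n)$ in the degenerate case of \S 6.1, a useful sanity check), differentiate in $t$ using $\dot a=-\varepsilon\nu(a)$, $\dot b=-\varepsilon\nu(b)$, $\dot\nu=-\tfrac{\varepsilon}{2}P'$, and show the result lies in the ideal generated by $\Phi$, $\nu(a)^2-P(a)$, $\nu(b)^2-P(b)$, $\nu(0)^2-P(0)$.

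However, the concrete reduction you describe does not terminate in zero as written, and this is a genuine gap. After eliminating $\nu(a)=-\bigl[a\nu(b)+(a-b)\nu(0)+\varepsilon ab(a-b)\bigr]/b$ and replacing $\nu(b)^2$ by $P(b)$ and $\nu(0)^2$ by $P(0)$, one finds
\begin{equation*}
-\varepsilon b\,\dot\Phi=(b-3a)\Bigl[\nu(0)\nu(b)+\varepsilon ab\,\nu(b)+\varepsilon b(a-b)\nu(0)\Bigr]
-2aP(b)-(a-b)P(0)-ab^2(a-b)(2a-b)+\tfrac{b}{2}\bigl(bP'(a)+aP'(b)\bigr),
\end{equation*}
which still contains $\nu(b)$ and $\nu(0)$ linearly and is \emph{not} an identity in $a,b,\lambda_i,\nu(b)$ (your claimed endpoint also omits $\nu(0)$ from the list of surviving radicals). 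The missing step is the relation $\nu(a)^2=P(a)$, which you list at the outset but never use: pushed through the elimination it reads
\begin{equation*}
2a(a-b)\Bigl[\nu(0)\nu(b)+\varepsilon ab\,\nu(b)+\varepsilon b(a-b)\nu(0)\Bigr]=b^2P(a)-a^2P(b)-(a-b)^2P(0)-a^2b^2(a-b)^2,
\end{equation*}
and—conveniently—its left-hand side involves exactly the same linear combination of radicals as the residue above. Substituting it turns $\dot\Phi$ into a purely rational expression in $a,b,\lambda_i$, which does vanish identically (I checked this numerically on a generic quartic). So the proposition is true and your strategy closes, but only after this additional substitution; as literally described, your reduction would leave a nonvanishing remainder and could lead you to the wrong conclusion.
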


\subsection{Construction of soliton-like solutions}

In order to construct solution $(u_n,v_n)$ we first have to solve equations \eqref{dotgjN1}. For the simplicity we assume that $\lambda_4=\lambda_2$, $\lambda_3=\lambda_1$, $\varepsilon=1$ and then easily get:
\begin{align}
& \dot{\gamma}_n=-(\gamma_n-\lambda_2)(\gamma_n-\lambda_1), \label{eq1_soliton}\\
& \frac{(\gamma_{n+1}-\lambda_1)(\gamma_{n+1}-\lambda_2)}{\gamma_{n+1}(\gamma_{n+1}-\gamma_{n})}-\frac{(\gamma_n-\lambda_1)(\gamma_n-\lambda_2)}{\gamma_n(\gamma_n-\gamma_{n+1})}-\frac{\lambda_1\lambda_2}{\gamma_{n}\gamma_{n+1}}-1=0. \label{eq2_soliton}
\end{align} 
By integrating equation \eqref{eq1_soliton} we obtain:
\begin{align}\label{soliton_gamn}
\gamma_n(t)=\frac{\lambda_1 e^{(\lambda_1-\lambda_2)(t+C(n))}-\lambda_2}{e^{(\lambda_1-\lambda_2)(t+C(n))}-1}.
\end{align} 
It remains to determine $C(n)$. To this end we first simplify \eqref{eq2_soliton} and find:
\begin{equation}\label{simple}
\gamma_{n+1}(t)=\frac{\lambda_1\lambda_2}{\lambda_1+\lambda_2-\gamma_n(t)}.
\end{equation}
Afterward we substitute \eqref{soliton_gamn} into \eqref{simple} and by means of elementary transformations we obtain the following equation for $C(n)$
\begin{align*}
\lambda_2e^{(\lambda_2-\lambda_1)C(n)}-\lambda_1e^{(\lambda_2-\lambda_1)C(n+1)}=0,
\end{align*} 
that is easily solved
\begin{align*}
C(n)=\frac{1}{\lambda_2-\lambda_1}\left(n\ln\frac{\lambda_1}{\lambda_2}-i\pi-c\right).
\end{align*} 
Thus, the sought function $\gamma_n(t)$ takes the form 
\begin{align*}
\gamma_n(t)=\lambda_2\frac{e^{(n-1)\theta+wt+c}+1}{e^{n\theta+wt+c}+1},
\end{align*} 
where $\theta=\ln\frac{\lambda_2}{\lambda_1}$, $w=\lambda_1-\lambda_2$, $c$ is a constant. Now by substituting the found representation of the function $\gamma_n(t)$ into the expressions \eqref{unN1} and \eqref{vnN1}, we obtain the desired solution to \eqref{RTL}
\begin{align*}
&u_n(t)=\lambda_1\frac{e^{(n+1)\theta+wt+c}+1}{e^{n\theta+wt+c}+1},\\
&v_n(t)=-\lambda_1\frac{e^{n\theta+wt+c}+1}{e^{(n-1)\theta+wt+c}+1}.
\end{align*} 
These formulae can be written in a compact form
\begin{align*}
&u_n(t)=\sqrt{\lambda_1 \lambda_2 }\frac{\cosh (\hat{w}+\frac{1}{2}\theta)}{\cosh \hat{w}},\quad \hat{w}=\frac{1}{2}(n\theta+wt+c),\\
&v_n(t)=-\sqrt{\lambda_1 \lambda_2 }\frac{\cosh \hat{w}}{\cosh (\hat{w}-\frac{1}{2}\theta)}.
\end{align*} 
The found solution has the form of a kink. Pictures 1, 2 show the graphs of the functions $u_n(t)$ and $v_n(t)$ corresponding to the following choice of parameter values: $\lambda_1=1$, $\lambda_2=4$, $c=0$, $t=0$.

\vspace{3mm}

\begin{tabular}{ l l }
\includegraphics[scale=0.6]{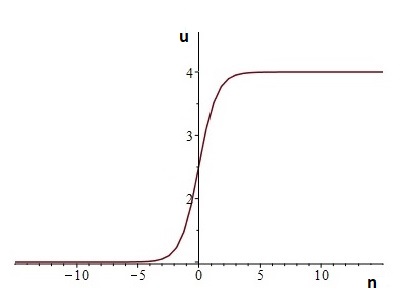} & \hspace{0.7 cm} \includegraphics[scale=0.6]{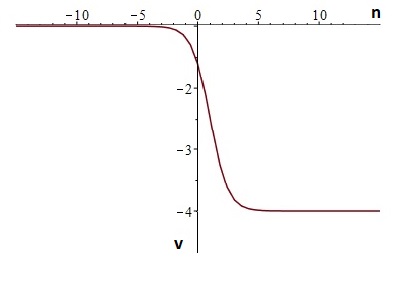} \\
\hspace{1.5 cm}  Pic. 1. Graph of $u_n$. & \hspace{2.5 cm}  Pic. 2. Graph of $v_n$.
\end{tabular}
 
\subsection{A periodic solution}

In this section we concentrate on the case when roots satisfy an involution $\lambda_3=-\lambda_1$,  $\lambda_4=-\lambda_2$. The time evolution is governed by the equation 
\begin{align}\label{gammat_ps}
\gamma_{n,t}=\sqrt{(\gamma_n^2-\lambda_1^2)(\gamma_n^2-\lambda_2^2)}.
\end{align} 
The discrete Dubrovin equation \eqref{dotgjN1}, defining a B$\ddot{a}$cklund transformation for \eqref{gammat_ps} now degenerates into a point transformation of the form
\begin{align}\label{gammap1_ps}
\gamma_{n+1}=-\frac{\lambda_1\lambda_2}{\gamma_n}.
\end{align} 
This circumstance implies that $\gamma_n(t)$ is periodic on $n$ with the period $2$. Equation \eqref{gammap1_ps} is linearized by a point transformation $\theta_n=\log \gamma_n$:
\begin{align*}
\theta_{n+1}=-\theta_n+\log (-\lambda_1\lambda_2).
\end{align*} 
Its general solution is $\theta_n=(-1)^na(t)+\frac{1}{2}\log (-\lambda_1\lambda_2)$, it allows to specify dependence of $\gamma_n(t)$ on $n$:
\begin{align}\label{gamma1_ps}
\gamma_n(t)=e^{(-1)^na(t)}\sqrt{-\lambda_1\lambda_2}.
\end{align} 
We change the variables in \eqref{gammat_ps} due to formulas $\gamma_n=\lambda_1y_n$ and $t=\tau/\lambda_2$ and get an integral
\begin{align}\label{gamma_int_ps}
\frac{dy_n}{dt}=\sqrt{(1-y_n^2)(1-k^2y_n^2)}, \qquad k=\frac{\lambda_1}{\lambda_2}
\end{align} 
defining the well known elliptic sinus of Jacobi with the modulus $k$, therefore we have 
\begin{align*}
y_n=\sn(\tau+c(n)).
\end{align*}
Now we turn back to the original variables and arrive at 
\begin{align}\label{gamman_ps}
\gamma_n(t)=\lambda_1\sn(\lambda_2t+c(n)).
\end{align}
Let's specify $c(n)$ due to equation \eqref{gamma1_ps} for $t=0$:
\begin{align*}
\gamma_n(0)=\lambda_1\sn(c(n))=e^{(-1)^na(0)}\sqrt{-\lambda_1\lambda_2}.
\end{align*}
Thus we have
\begin{align}\label{snc_ps}
\sn(c(2n))=\frac{e^{a(0)}}{\sqrt{-k}}, \quad \sn(c(2n+1))=\frac{e^{-a(0)}}{\sqrt{-k}}.
\end{align}
Here $a(0)$ is considered as an arbitrary constant. We assume that parameter $k$ is negative otherwise function $c(n)$ will not be real valued function. We note that function $c(n)$ is periodic with the period $2$. The values of $c(2n):=c_0$ and $c(2n+1):=c_1$ are chosen in such a way that quantities $\cn(c_0)\dn(c_0)$ and $\cn(c_1)\dn(c_1)$ satisfy the relation $\cn(c_0)\dn(c_0)+\sn^2(c_0)\cn(c_1)\dn(c_1)=0$. For this, it is enough that $c_0$ and $c_1$ are located on opposite sides of the maximum point (see Pic. 3).

\begin{tabular}{ l }
\hspace{1 cm} \includegraphics[scale=0.7]{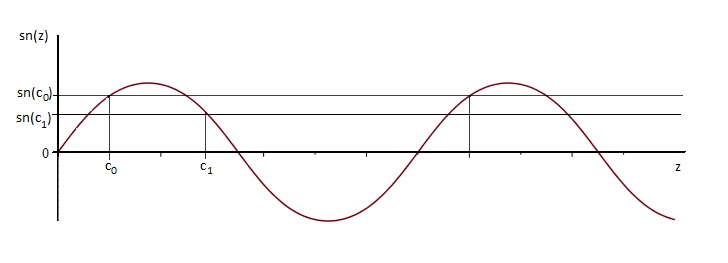} \\
\hspace{6 cm} Pic. 3. Choice of $c_0$, $c_1$.
\end{tabular}

\vspace{1 cm}

In this case the solution is given by
\begin{align}
&u_n(t)=\frac{\lambda_2}{2}\left(|k| \sn(\varkappa(n,t))+\frac{\cn(\varkappa(n,t))\dn(\varkappa(n,t))+1}{\sn(\varkappa(n,t))}\right), \label{un_ps} \\
&v_n(t)=-\frac{\lambda_2}{\sn(\varkappa(n,t))}, \qquad \varkappa(n,t)=\lambda_2t+c(n).  \label{vn_ps}
\end{align}
Note that the functions \eqref{un_ps}, \eqref{vn_ps} have poles at the zeros of the function $\sn(\varkappa(n,t))$. 

One can pass to the variables $x=x_n(t)$ by using the inverse transformation of \eqref{uvx}:
\begin{align}
&x_n-x_{n+1}=\log\left(\frac{u_n}{v_n}\right)+\log\left(\frac{1}{h^2}\right),\label{N1} \\
&\frac{dx_n}{dt}=u_n+v_n+\frac{1}{h}. \label{N2} 
\end{align}
Due to the periodicity property $u_{n+2}=u_n$, $v_{n+2}=v_n$ an explicit expression for $x_n$ is easily found
\begin{align}
&x_{2n}=x_0-n\left(\log\left(\frac{u_0u_1}{v_0v_1}\right)+4\log\left(\frac{1}{h}\right)\right), \quad n\in \Z \\
&x_{2n+1}=x_{2n}-\left(\log\left(\frac{u_0}{v_0}\right)+2\log\left(\frac{1}{h}\right)\right).
\end{align}
Desired $x_0$ is found by integrating \eqref{N2}:
\begin{align*}
x_0=-\frac{t}{h}+\frac{1}{2}\log\left(\sqrt{\frac{1-|k|}{1+|k|}}\right)+\frac{1}{2}\log\left(1-|k|\sn^2(\varkappa(0,t))-\cn(\varkappa(0,t))\dn(\varkappa(0,t))\right).
\end{align*}
Solution \eqref{un_ps}, \eqref{vn_ps} in the limit for $k^2\rightarrow 1$ takes the form 
\begin{align*}
u_n(t)=\lambda_2\coth(\lambda_2t+c(n)), \qquad v_n(t)=-\lambda_2\coth(\lambda_2t+c(n)),
\end{align*}
where $\tanh(c(n))=e^{(-1)^na(0)}$.

\subsection{Construction of elliptic solution}

Let us construct an elliptic solution to system \eqref{dotgjN1}. For simplicity, we introduce notations 
$$e_1=-\lambda_1-\lambda_2-\lambda_3-\lambda_4, \qquad e_2=\lambda_1\lambda_2+\lambda_1\lambda_3+\lambda_1\lambda_4+\lambda_2\lambda_3+\lambda_2\lambda_4+\lambda_3\lambda_4,$$ 
$$e_3=-\lambda_1\lambda_2\lambda_3-\lambda_1\lambda_2\lambda_4-\lambda_1\lambda_3\lambda_4-\lambda_2\lambda_3\lambda_4,  \qquad e_4^2=\lambda_1\lambda_2\lambda_3\lambda_4,$$ set $\varepsilon=-1$, and rewrite system \eqref{dotgjN1} in the form 
\begin{align}
\gamma_{n,t}=&\sqrt{\gamma_n^4+e_1\gamma_n^3+e_2\gamma_n^2+e_3\gamma_n+e_4^2},\label{gt_ell}\\
\gamma_{n+1}=&\frac{2(e_3-e_1e_4)\gamma_n^2+(8e_4^2+e_1e_3-4e_2e_4)\gamma_n+2e_4(e_1e_4-e_3)}{\gamma_n(e_1^2\gamma_n-4e_2\gamma_n+8e_4\gamma_n+4e_1e_4-4e_3)} \nonumber\\
& + \frac{2(e_1e_4-e_3)\sqrt{\gamma_n^4+e_1\gamma_n^3+e_2\gamma_n^2+e_3\gamma_n+e_4^2}}{\gamma_n(e_1^2\gamma_n-4e_2\gamma_n+8e_4\gamma_n+4e_1e_4-4e_3)}. \label{gp1_ell}
\end{align}

We transform the polynomial
\begin{align}\label{G}
G(\gamma_n):=\gamma_n^4+e_1\gamma_n^3+e_2\gamma_n^2+e_3\gamma_n+e_4^2=y_n^2
\end{align}
into the Weierstrass normal form using fractional rational transformation 
\begin{align}\label{transf}
\gamma_n=\frac{2\alpha_0\xi_n-\alpha_0A_2-2A_1}{2\xi_n-A_2}, \qquad y_n=\frac{A_1\eta_n}{\left(\xi_n-\frac{1}{2}A_2\right)^2},
\end{align}
where $\alpha_0$ is a root of the polynomial $G(\gamma_n)$, $A_1=-\frac{1}{4}(4\alpha_0^3+3e_1\alpha_0^2+2e_2\alpha_0+e_3)$, \linebreak $A_2=\frac{1}{6}(6\alpha_0^2+3e_1\alpha_0+e_2)$.
The elliptic curve in the new variables $\xi_n$, $\eta_n$ will have the form
\begin{align*}
4\xi_n^3-g_2\xi_n-g_3=\eta_n^2,
\end{align*}
where $g_2=e_4^2+\frac{1}{12}e_2^2-\frac{1}{4}e_1e_3$, $g_3=-\frac{1}{16}e_3^2+\frac{1}{48}e_1e_2e_3-\frac{1}{216}e_2^3-\frac{1}{16}e_1^2e_4^2+\frac{1}{6}e_2e_4^2$.

The system of equations \eqref{gt_ell}, \eqref{gp1_ell}, due to the transformation \eqref{transf}, goes over to the system
\begin{align}
\xi_{n,t}=&\sqrt{4\xi_n^3-g_2\xi_n-g_3},\label{xit_ell}\\
\xi_{n+1}=&\frac{e_2-6e_4}{12}+\frac{3}{4}\frac{8e_4^2-4e_2e_4+e_1e_3}{12\xi_n+6e_4-e_2}\nonumber\\
&+\frac{9}{2}\frac{(e_1e_4-e_3)^2}{(12\xi_n+6e_4-e_2)^2}-\frac{18(e_1e_4-e_3)\sqrt{4\xi_n^3-g_2\xi_n-g_3}}{(12\xi_n+6e_4-e_2)^2}.\label{xip1_ell}
\end{align}
Equations \eqref{xit_ell}, \eqref{xip1_ell} are consistent, since the point transformation preserves the compatibility condition. Let us construct a general solution to equation \eqref{xit_ell}, which is expressed in terms of the Weierstrass $\wp$-function 
\begin{align}\label{xi_wp}
\xi_n(t)=\wp(t+c(n)).
\end{align}
Next, we need to find the explicit form of function $c(n)$ in formula \eqref{xi_wp} so that it is also a solution to equation \eqref{xip1_ell}.
\begin{proposition}
Function $\xi_n(t)$ can be taken as 
\begin{align}\label{xi_wp_vt}
\xi_n(t)=\wp(t+vn+\beta).
\end{align}
\end{proposition}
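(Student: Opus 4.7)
The plan is to verify the ansatz $\xi_n(t)=\wp(t+vn+\beta)$ by showing that, under relation \eqref{xip1_ell}, the shift $c(n+1)-c(n)$ is a fixed constant independent of $n$ and $t$. The continuous equation \eqref{xit_ell} already forces $\xi_n(t)=\wp(t+c(n))$ for some function $c(n)$, so everything reduces to determining how $c$ transforms under the Bäcklund map \eqref{xip1_ell}. Setting $u=t+c(n)$, we have $\xi_n=\wp(u)$ and $\xi_{n,t}=\wp'(u)$, and the task becomes: recognize the right-hand side of \eqref{xip1_ell} as $\wp(u+v)$ for a single constant $v$ depending only on $g_2,g_3$.

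The tool is the Weierstrass addition formula
$$\wp(u+v)=-\wp(u)-\wp(v)+\frac{1}{4}\left(\frac{\wp'(u)-\wp'(v)}{\wp(u)-\wp(v)}\right)^2.$$
Expanding in $X=\wp(u)-\wp(v)$ and using $\wp'(v)^2=4\wp(v)^3-g_2\wp(v)-g_3$, one obtains
$$\wp(u+v)=\wp(v)+\frac{12\wp(v)^2-g_2}{4X}+\frac{\wp'(v)^2}{2X^2}-\frac{\wp'(u)\,\wp'(v)}{2X^2}.$$
Comparing with the RHS of \eqref{xip1_ell}, the constant term forces
$$\wp(v)=\frac{e_2-6e_4}{12},$$
which is exactly the identification that turns the denominator $12\xi_n+6e_4-e_2$ into $12(\wp(u)-\wp(v))$. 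The $\wp'(u)/X^2$ coefficient then forces
$$\wp'(v)=\frac{e_1 e_4 - e_3}{4}.$$
Matching the $1/X$ and the remaining $1/X^2$ coefficients yields two further scalar identities that must be verified.

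I would handle these identities in sequence. The $1/X$ check reduces to $48\wp(v)^2-4g_2=8e_4^2-4e_2e_4+e_1e_3$, a direct substitution using $g_2=e_4^2+\tfrac{1}{12}e_2^2-\tfrac{1}{4}e_1e_3$. The residual $1/X^2$ check amounts to $\wp'(v)^2=4\wp(v)^3-g_2\wp(v)-g_3$, i.e.\
$$\frac{(e_1e_4-e_3)^2}{16}=4\wp(v)^3-g_2\wp(v)-g_3,$$
a polynomial identity in $e_1,e_2,e_3,e_4$ that follows by routine expansion of both sides using the explicit formulas for $g_2,g_3$ given right after \eqref{transf} (the $e_2^3$, $e_1e_2e_3$, and $e_2e_4^2$ contributions cancel, leaving the perfect square).

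The main obstacle is precisely this last algebraic identity: it is the only non-mechanical point, and it is where the specific structure of $g_2,g_3$ produced by the Weierstrass reduction \eqref{transf} is actually used. Once it is established, the signs of the square root in \eqref{xip1_ell} and of $\wp'(v)$ are compatible (they may be absorbed into the choice of branch), and we conclude $\wp(t+c(n+1))\equiv\wp(t+c(n)+v)$ as functions of $t$. Since $\wp$ is non-constant on any fundamental domain, this forces $c(n+1)-c(n)\equiv v\pmod{\text{periods}}$, so $c(n)=vn+\beta$ with $\beta=c(0)$, proving the proposition.
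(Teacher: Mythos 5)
Your proof is correct and follows essentially the same route as the paper: both recognize the discrete Dubrovin equation \eqref{xip1_ell} as the Weierstrass addition theorem with $\wp(v)=\frac{e_2-6e_4}{12}$ and $\wp'(v)=\frac{e_1e_4-e_3}{4}$, the paper by completing the square into the form \eqref{xip1-th} and you by expanding the addition formula in powers of $\wp(u)-\wp(v)$ and matching coefficients. You are somewhat more explicit than the paper about the one genuinely nontrivial step --- checking that the point $\bigl(\frac{e_2-6e_4}{12},\frac{e_1e_4-e_3}{4}\bigr)$ lies on the curve $\eta^2=4\xi^3-g_2\xi-g_3$, which the paper dismisses as ``easily verified'' --- and that identity does indeed hold as you describe.
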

\begin{proof} 
Let us rewrite formula \eqref{xip1_ell} in the form
\begin{align}\label{xip1-th}
\xi_{n+1}=\frac{1}{4}\left[\frac{\sqrt{4\xi_n^3-g_2\xi_n-g_3}-\frac{e_1e_4-e_3}{4}}{\xi_n-\frac{e_2-6e_4}{12}}\right]^2-\xi_n-\frac{e_2-6e_4}{12}.
\end{align}
Let us show that equation \eqref{xip1-th} with an appropriate choice of parameters completely coincides with the addition theorem for the Weierstrass $\wp$-function 
\begin{align}\label{add-th}
\wp(u+v)=\frac{1}{4}\left[\frac{\wp'(u)-\wp'(v)}{\wp(u)-\wp(v)}\right]^2-\wp(u)-\wp(v),
\end{align}
where $\wp'(u)$  is the derivative of the function $\wp(u)$. We set $c(n)=vn+\beta$, $u=t+vn+\beta$ in equation \eqref{add-th}, where $v$ is found from equation 
\begin{align*}
\wp(v)=\frac{e_2-6e_4}{12}.
\end{align*}
It is easily verified that $\wp'(v)$ can be chosen as follows
\begin{align*}
\wp'(v)=\sqrt{4\left(\frac{e_2-6e_4}{12}\right)^3-g_2\left(\frac{e_2-6e_4}{12}\right)-g_3}=\frac{e_1e_4-e_3}{4}.
\end{align*}
Then we actually find that equations \eqref{xip1-th} and \eqref{add-th} coincide and function \eqref{xi_wp_vt} is a solution to equations \eqref{xit_ell} and \eqref{xip1_ell}.
\end{proof}
Let us express $\gamma_n(t)$ in terms of $\xi_n(t)$ and write down the solution $u_n(t)$ and $v_n(t)$ of system \eqref{RTL}.  
Using the M$\ddot{o}$bius transform \eqref{transf}, we write a formula for $\gamma_n(t)$ in terms of the known function $\xi_n(t)$:
\begin{align*}
\gamma_n=\frac{2\alpha_0\wp(z)-2A_1-\alpha_0A_2}{2\wp(z)-A_2},
\end{align*}
where $z=t+vn+\beta$. Then the solution to system \eqref{RTL} given by the equalities \eqref{unN1} and \eqref{vnN1} has the form
\begin{align*}
u_n(t)=&-\frac{1}{2}\frac{\alpha_0\wp(z)+r_1}{\wp(z)+r_2}+\frac{1}{2}\frac{A_1\wp'(z)}{(\wp(z)+r_2)(\alpha_0\wp(z)+r_1)}-\frac{1}{2}\frac{e_4(\wp(z)+r_2)}{\alpha_0\wp(z)+r_1}-\frac{1}{4}e_1 ,\\
v_n(t)=&\frac{e_4(\wp(z)+r_2)}{\alpha_0\wp(z)+r_1},
\end{align*}
where $r_1=-(2A_1+\alpha_0A_2)/2$, $r_2=-A_2/2$.

\begin{remark}
In generic case discrete Dubrovin type equation \eqref{gp1_ell} is rather complex. It realizes a B$\ddot{a}$cklund transformation for the equation \eqref{gt_ell}, however it is not reduced to a point transformation. It is worth mentioning that it is closely related to the addition theorem for the Weierstrass $\wp$-function.
\end{remark}

\section*{Conclusion}

The article discusses the problem of constructing algebro-geometric solutions of nonlinear integrable lattices. The novelty of our approach lies in the fact that instead of the usual Lax pair, we use a generalized invariant manifold, which is derived directly from the equation under consideration without invoking additional external information. For the Ruijsenaars-Toda lattice, the GIM derivation is presented in the Appendix. The advantage of our approach is that both continuous and discrete equations of Dubrovin type can be derived from the linearized equation and an appropriately chosen generalized invariant manifold. Some particular solutions for the lattice are presented.

\section*{Acknowledgments}

The work of A.R. Khakimova is supported in part by Young Russian Mathematics award,  the work of A.O.Smirnov was supported by the Ministry of Science and Higher Education of the Russian Federation (grant agreement № FSRF-2020-0004).

\section*{Appendix}

In this section, we describe the calculation process the generalized invariant manifold \eqref{RTL_oim}. 
We will look for a GIM in the form \eqref{RTL_oim1} from the  compatibility condition with system \eqref{RTL_lin1}.
In other words, the functions $F$ and $G$ must satisfy conditions 
\begin{equation}
\begin{aligned}\label{comp_cond_plus_0}
&\frac{d}{dt}F\left(P_n,Q_n,u_n,v_n,u_{n-1}\right) \\
&\qquad \quad-D_n\left(u_{n-1}P_{n-1}+(u_n-u_{n-1})P_n-u_nP_{n+1}+v_n(Q_n-Q_{n+1})\right)=0,\\
&\frac{d}{dt}G\left(P_n,Q_n,u_n,v_n,u_{n-1}\right)-D_n\left(u_{n-1}(P_{n-1}-P_{n})\right)=0
\end{aligned} 
\end{equation}
by virtue of equations \eqref{RTL}, \eqref{RTL_lin1}, \eqref{RTL_oim1} and equalities 
\begin{equation}
\begin{aligned}\label{RTL_oim_m}
&P_{n-1}=f\left(P_n,Q_n,u_{n-1},v_{n-1},u_{n-2}\right),\\
&Q_{n-1}=g\left(P_n,Q_n,u_{n-1},v_{n-1},u_{n-2}\right),
\end{aligned} 
\end{equation}
which in a sense are inverse to \eqref{RTL_oim1}. We note that equations \eqref{RTL_oim_m} also define GIM, so they must satisfy compatibility conditions
\begin{equation}
\begin{aligned}\label{comp_cond_minus_0}
&\frac{d}{dt}f\left(P_n,Q_n,u_{n-1},v_{n-1},u_{n-2}\right)\\
&\qquad \quad -D^{-1}_n\left(u_{n-1}P_{n-1}+(u_n-u_{n-1})P_n-u_nP_{n+1}+v_n(Q_n-Q_{n+1})\right)=0,\\
&\frac{d}{dt}g\left(P_n,Q_n,u_{n-1},v_{n-1},u_{n-2}\right)-D^{-1}_n\left(u_{n-1}(P_{n-1}-P_{n})\right)=0
\end{aligned} 
\end{equation}
due to the same equations \eqref{RTL}, \eqref{RTL_lin1}, \eqref{RTL_oim1} and \eqref{RTL_oim_m}. Here $D_n$ and $D^{-1}_n$ are shift operators acting according to rules $D_nu_n=u_{n+1}$ and $D^{-1}_nu_n=u_{n-1}$.

Let us rewrite equations \eqref{comp_cond_plus_0} and \eqref{comp_cond_minus_0} in a more detailed form:
\begin{equation}
\begin{aligned}\label{comp_cond_plus_1}
&\frac{dF}{dP_n}P_{n,t}+\frac{dF}{dQ_n}Q_{n,t}+\frac{dF}{du_n}u_{n,t}+\frac{dF}{dv_n}v_{n,t}+\frac{dF}{du_{n-1}}u_{n-1,t} \\
&\qquad \quad-\left(u_{n}P_{n}+(u_{n+1}-u_{n})P_{n+1}-u_{n+1}P_{n+2}+v_{n+1}(Q_{n+1}-Q_{n+2})\right)=0,\\
&\frac{dG}{dP_n}P_{n,t}+\frac{dG}{dQ_n}Q_{n,t}+\frac{dG}{du_n}u_{n,t}+\frac{dG}{dv_n}v_{n,t}+\frac{dG}{du_{n-1}}u_{n-1,t}-u_{n}(P_{n}-P_{n+1})=0,
\end{aligned} 
\end{equation}
\begin{equation}
\begin{aligned}\label{comp_cond_minus_1}
&\frac{df}{dP_n}P_{n,t}+\frac{df}{dQ_n}Q_{n,t}+\frac{df}{du_{n-1}}u_{n-1,t}+\frac{df}{dv_{n-1}}v_{n-1,t}+\frac{df}{du_{n-2}}u_{n-2,t} \\
&\qquad \quad-\left(u_{n-2}P_{n-2}+(u_{n-1}-u_{n-2})P_{n-1}-u_{n-1}P_{n}+v_{n-1}(Q_{n-1}-Q_{n})\right)=0,\\
&\frac{dg}{dP_n}P_{n,t}+\frac{dg}{dQ_n}Q_{n,t}+\frac{dg}{du_{n-1}}u_{n-1,t}+\frac{dg}{dv_{n-1}}v_{n-1,t}+\frac{dg}{du_{n-2}}u_{n-2,t}\\
&\qquad \quad -u_{n-2}(P_{n-2}-P_{n-1})=0,
\end{aligned} 
\end{equation}
mod (\eqref{RTL}, \eqref{RTL_lin1}, \eqref{RTL_oim1}, \eqref{RTL_oim_m}). We substitute variables $u_{n,t}$, $v_{n,t}$, $u_{n-1,t}$, $v_{n-1,t}$, $u_{n-2,t}$, $P_{n,t}$, $Q_{n,t}$, $P_{n+1,t}$, $Q_{n+1,t}$, $P_{n+2,t}$, $P_{n-1,t}$, $Q_{n-1,t}$, $P_{n-2,t}$ into equations \eqref{comp_cond_plus_1}, \eqref{comp_cond_minus_1} by virtue of equalities \eqref{RTL}, \eqref{RTL_lin1}, \eqref{RTL_oim1},  \eqref{RTL_oim_m}.
Then, for unknown functions $F$, $G$, $f$ and $g$ we obtain the following four equalities:
\begin{align}
&\left(u_{n-1}f+(u_n-u_{n-1})P_n-u_nF+v_n(Q_n-G)\right)\frac{dF}{dP_n} +u_{n-1}(f-P_{n})\frac{dF}{dQ_n} \nonumber \\
&\qquad \quad +u_n\left(u_{n-1}-u_{n+1}+v_n-v_{n+1}\right)\frac{dF}{du_n}+v_n\left(u_{n-1}-u_{n}\right)\frac{dF}{dv_n} \nonumber \\
&\qquad \quad +u_{n-1}\left(u_{n-2}-u_{n}+v_{n-1}-v_{n}\right)\frac{dF}{du_{n-1}} \nonumber \\
&\qquad \quad -\left(u_{n}P_{n}+(u_{n+1}-u_{n})F-u_{n+1}D_n(F)+v_{n+1}(G-D_n(G))\right)=0,\label{eq1}\\
&\left(u_{n-1}f+(u_n-u_{n-1})P_n-u_nF+v_n(Q_n-G)\right)\frac{dG}{dP_n} +u_{n-1}(f-P_{n})\frac{dG}{dQ_n}\nonumber \\
&\qquad \quad +u_n\left(u_{n-1}-u_{n+1}+v_n-v_{n+1}\right)\frac{dG}{du_n}+v_n\left(u_{n-1}-u_{n}\right)\frac{dG}{dv_n} \nonumber \\
&\qquad \quad +u_{n-1}\left(u_{n-2}-u_{n}+v_{n-1}-v_{n}\right)\frac{dG}{du_{n-1}}-u_{n}(P_{n}-F)=0, \label{eq2}\\
&\left(u_{n-1}f+(u_n-u_{n-1})P_n-u_nF+v_n(Q_n-G)\right)\frac{df}{dP_n} +u_{n-1}(f-P_{n})\frac{df}{dQ_n} \nonumber \\
&\qquad \quad +u_{n-1}\left(u_{n-2}-u_{n}+v_{n-1}-v_{n}\right)\frac{df}{du_{n-1}}+v_{n-1}\left(u_{n-2}-u_{n-1}\right)\frac{df}{dv_{n-1}} \nonumber \\
&\qquad \quad +u_{n-2}\left(u_{n-3}-u_{n-1}+v_{n-2}-v_{n-1}\right)\frac{df}{du_{n-2}} \nonumber \\
&\qquad \quad-\left(u_{n-2}D^{-1}_n(f)+(u_{n-1}-u_{n-2})f-u_{n-1}P_{n}+v_{n-1}(g-Q_{n})\right)=0, \label{eq3}\\
&\left(u_{n-1}f+(u_n-u_{n-1})P_n-u_nF+v_n(Q_n-G)\right)\frac{dg}{dP_n} +u_{n-1}(f-P_{n})\frac{dg}{dQ_n}\nonumber \\
&\qquad \quad +u_{n-1}\left(u_{n-2}-u_{n}+v_{n-1}-v_{n}\right)\frac{dg}{du_{n-1}}+v_{n-1}\left(u_{n-2}-u_{n-1}\right)\frac{dg}{dv_{n-1}}\nonumber \\
&\qquad \quad +u_{n-2}\left(u_{n-3}-u_{n-1}+v_{n-2}-v_{n-1}\right)\frac{dg}{du_{n-2}}-u_{n-2}(D^{-1}_n(f)-f)=0,\label{eq4}
\end{align} 
where
\begin{align*}
& D_n(F\left(P_n,Q_n,u_n,v_n,u_{n-1}\right))\\
&\qquad \qquad  =F\left(F\left(P_n,Q_n,u_n,v_n,u_{n-1}\right),G\left(P_n,Q_n,u_n,v_n,u_{n-1}\right),u_{n+1},v_{n+1},u_{n}\right), \\
& D_n(G\left(P_n,Q_n,u_n,v_n,u_{n-1}\right))\\
&\qquad \qquad  =G\left(F\left(P_n,Q_n,u_n,v_n,u_{n-1}\right),G\left(P_n,Q_n,u_n,v_n,u_{n-1}\right),u_{n+1},v_{n+1},u_{n}\right), \\
& D^{-1}_n(f\left(P_n,Q_n,u_{n-1},v_{n-1},u_{n-2}\right))\\
&\qquad \qquad  =f\left(f\left(P_n,Q_n,u_{n-1},v_{n-1},u_{n-2}\right),g\left(P_n,Q_n,u_{n-1},v_{n-1},u_{n-2}\right),u_{n-2},v_{n-2},u_{n-3}\right), \\
& D^{-1}_n(g\left(P_n,Q_n,u_{n-1},v_{n-1},u_{n-2}\right))\\
&\qquad \qquad  =g\left(f\left(P_n,Q_n,u_{n-1},v_{n-1},u_{n-2}\right),g\left(P_n,Q_n,u_{n-1},v_{n-1},u_{n-2}\right),u_{n-2},v_{n-2},u_{n-3}\right).
\end{align*}

Let us proceed directly to the study of the system of equations \eqref{eq1}-\eqref{eq4}, which is overdetermined. The scheme for solving the equations \eqref{eq1}-\eqref{eq4} is as follows. We will split them down according to the highest order shifts of the variables $u_n$ and $v_n$ and, thus, step by step we will solve the resulting equations. Moreover, at each step, we will refine the form of the functions $F$, $G$, $f$ and $g$ and substitute the obtained expressions again into the same equations \eqref{eq1}-\eqref {eq4}. Therefore, when we write in the future that we are working with the equations \eqref{eq1}-\eqref{eq4}, we mean that the previously found expressions for the sought functions have already been taken into account.

Differentiate the equation \eqref{eq3} with respect to the variable $u_{n-3}$ twice, and then from the resulting equation we find that the function $f$ depends linearly on the variable $u_{n-2}$:
\begin{align*}
f\left(P_n,Q_n,u_{n-1},v_{n-1},u_{n-2}\right)=f_1\left(P_n,Q_n,u_{n-1},v_{n-1}\right)u_{n-2}+f_2\left(P_n,Q_n,u_{n-1},v_{n-1}\right).
\end{align*}
Further, equating the coefficients at $u_{n-3}$ in the equations \eqref{eq3} and \eqref{eq4}, as well as at the variable $u_{n-2}$ in \eqref{eq1} and \eqref{eq2} we obtain that the functions $f$, $g$ and $F$, $G$ do not depend on the variables $u_{n-2}$ and $u_{n-1}$ respectively:
\begin{align*}
&f_1\left(P_n,Q_n,u_{n-1},v_{n-1}\right)=0, \\
&g\left(P_n,Q_n,u_{n-1},v_{n-1},u_{n-2}\right)=g_1\left(P_n,Q_n,u_{n-1},v_{n-1}\right),\\
&F\left(P_n,Q_n,u_n,v_n,u_{n-1}\right)=F_1\left(P_n,Q_n,u_n,v_n\right),\\
&G\left(P_n,Q_n,u_n,v_n,u_{n-1}\right)=G_1\left(P_n,Q_n,u_n,v_n\right).
\end{align*}

Let us equate the coefficients for the variables $u_{n+1}$, $v_{n+1}$ in the equation \eqref{eq2} and find:
\begin{align*}
G_1\left(P_n,Q_n,u_n,v_n\right)=G_2\left(P_n,Q_n,v_n\right).
\end{align*}
Next, we differentiate \eqref{eq4} with respect to $v_{n-2}$ and get that
\begin{align*}
f_2\left(P_n,Q_n,u_{n-1},v_{n-1}\right)=f_3\left(P_n,Q_n,u_{n-1}\right).
\end{align*}
On the next step, we twice differentiate equations \eqref{eq1} and \eqref{eq3} with respect to $u_{n+1}$ and $u_{n-2}$ respectively. From the equations obtained in this way, we find that functions $F_1$ and $f_3$ have the form 
\begin{align*}
&F_1\left(P_n,Q_n,u_n,v_n\right)=\frac{F_2\left(P_n,Q_n,v_n\right)}{u_n}+F_3\left(P_n,Q_n,v_n\right),\\
&f_3\left(P_n,Q_n,u_{n-1}\right)=\frac{f_4\left(P_n,Q_n\right)}{u_{n-1}}+f_5\left(P_n,Q_n\right).
\end{align*}
Note that in the equations \eqref{eq1} and \eqref{eq3} the dependence on the variables $u_{n+1}$ and $u_{n-2}$ respectively, is defined and therefore we can equate the coefficients of these variables. Then we get that the functions $F_3$ and $f_5$ are constant values: 
\begin{align*}
F_3=C_1, \qquad f_5=C_2.
\end{align*}
Let us act by the differentiation operator $\frac{\partial} {\partial u_{n-1}}$ on the equations \eqref{eq1} and \eqref{eq2}, by the operator $\frac{\partial}{\partial u_{n}}$  on the  \eqref{eq3} and \eqref{eq4}. As a result we obtain the following representations for the sought functions:
\begin{align*}
&F_2\left(P_n,Q_n,v_n\right)=\frac{F_4\left(Q_n-P_n,v_n(C_2-P_n)\right)}{C_2-P_n},\\
&G_2\left(P_n,Q_n,v_n\right)=G_3\left(Q_n-P_n,v_n(C_2-P_n)\right),\\
&f_4\left(P_n,Q_n\right)=\frac{f_6(Q_n)}{C_1-P_n},\\
&g_1\left(P_n,Q_n,u_{n-1},v_{n-1}\right)=g_2\left(Q_n,u_{n-1}(C_1-P_n),v_{n-1}\right).
\end{align*}
Let us collect the coefficients at the variable $u_n$ in \eqref{eq2} and obtain the following equation on the unknown function $G_3$:
\begin{align*}
(P_n-C_1)\left(\frac{dG_3\left(Q_n-P_n,v_n(C_2-P_n)\right)}{d(Q_n-P_n)}+1\right)+v_n(C_2-C_1)\frac{dG_3\left(Q_n-P_n,v_n(C_2-P_n)\right)}{d(v_n(C_2-P_n))}=0.
\end{align*}
Let us introduce in the last equality a temporary change of variables  
\begin{align*}
\theta_n=Q_n-P_n, \qquad \delta_n=v_n(C_2-P_n)
\end{align*}
and then the equation will be written in the form: 
\begin{align}\label{G2}
\left(\frac{dG_3(\theta_n,\delta_n)}{d\theta_n}+1\right)\left(P_n^2-(C_2+C_1)P_n+C_1C_2\right)+\delta_n(C_1-C_2)\frac{dG_3(\theta_n,\delta_n)}{d\delta_n}=0.
\end{align}
In the resulting equation, we compare the coefficients at $P^2_n$ and find:
\begin{align*}
G_3(\theta_n,\delta_n)=-\theta_n+G_4(\delta_n).
\end{align*}
Further, from \eqref{G2} we obtain two more relations:
\begin{align*}
\frac{dG_4(\delta_n)}{d\delta_n}=0 \qquad \textrm{or} \qquad C_2=C_1.
\end{align*}
The first relation after substitution into equation \eqref{eq2} leads to a contradiction; therefore, the second relation is true. As a result, returning to our variables, we have:
\begin{align*}
G_3(Q_n-P_n,v_n(C_1-P_n))=-Q_n+P_n+G_4(v_n(C_1-P_n)).
\end{align*}
In a similar way, we investigate the coefficients of $u_{n-2}$ in the equation \eqref{eq4} and finally find:
\begin{align*}
g_2\left(Q_n,u_{n-1}(C_1-P_n),v_{n-1}\right)=\frac{f_6(Q_n)}{u_{n-1}(C_1-P_n)}+g_3\left(Q_n,\frac{v_{n-1}}{u_{n-1}(C_1-P_n)}\right).
\end{align*}
Let us apply operator $\frac{\partial}{\partial v_n}$ to \eqref{eq3} and then by solving the resulting equation we obtain:
\begin{align*}
F_4\left(Q_n-P_n,v_n(C_1-P_n)\right)=v_n(C_1-P_n)(C_1+2Q_n-2P_n-G_4(v_n(C_1-P_n)))+F_5(Q_n-P_n).
\end{align*}
Then we subtract the equation \eqref{eq3} from the equation \eqref{eq4} so that the function $D_n^{-1}(f_6(Q_n))$ disappears. As a result, we get:
\begin{align*}
g_3\left(Q_n,\frac{v_{n-1}}{u_{n-1}(C_1-P_n)}\right)=&-Q_n+\frac{v_{n-1}Q^2_n}{u_{n-1}(C_1-P_n)}\\
&+\frac{v_{n-1}}{u_{n-1}(C_1-P_n)}g_4\left(-Q_n+\frac{u_{n-1}}{v_{n-1}}(C_1-P_n)\right).
\end{align*}
Further investigation of the equation \eqref{eq2} leads to the following relations:
\begin{align*}
&G_4(v_n(C_1-P_n))=C_3v_n(C_1-P_n)+C_4,\\
&F_5(Q_n-P_n)=f_6(Q_n)+\frac{(C_1-P_n)(C_4-2Q_n+P_n)}{C_3},
\end{align*}
where $C_3$, $C_4$ are arbitrary constants, such that $C_3\neq 0$. Case $C_3=0$ leads to a contradiction.

Finally, using equations \eqref{eq1} and \eqref{eq3} we find:
\begin{align*}
&f_6(Q_n)=\frac{(C_1+C_4)Q_n-Q_n^2}{C_3}+C_5,\\
&g_4\left(-Q_n+\frac{u_{n-1}}{v_{n-1}}(C_1-P_n)\right)=(C_1+C_4)\left(-Q_n+\frac{u_{n-1}}{v_{n-1}}(C_1-P_n)\right)-C_3C_5.
\end{align*}
Summarizing the above calculations, we find that the sought functions $F$, $G$, $f$ and $g$ have the form:
\begin{align*}
&F(P_n,Q_n,u_n,v_n,u_{n-1})=C_1+\frac{C_5}{u_n(C_1-P_n)}+\frac{v_n(C_1-C_4+2Q_n-2P_n)}{u_n}\\
&\qquad \qquad \qquad \qquad -\frac{C_3v_n^2(C_1-P_n)}{u_n}+\frac{(C_4+P_n-Q_n)(C_1-P_n+Q_n)}{u_nC_3(C_1-P_n)},\\
&G(P_n,Q_n,u_n,v_n,u_{n-1})=-Q_n+P_n+C_3v_n(C_1-P_n)+C_4,\\
&f(P_n,Q_n,u_{n-1},v_{n-1},u_{n-2})=C_1+\frac{C_5}{u_{n-1}(C_1-P_n)}+\frac{Q_n(C_1+C_4-Q_n)}{C_3u_{n-1}(C_1-P_n)},\\
&g(P_n,Q_n,u_{n-1},v_{n-1},u_{n-2})=C_1+C_4-Q_n-\frac{(C_3v_{n-1}-1)(C_1Q_n+C_4Q_n-Q_n^2+C_3C_5)}{C_3u_{n-1}(C_1-P_n)}.
\end{align*}
It is easy to show that the constants $C_1$ and $C_4$ are removed by point transformations, so we set $C_1=C_4=0$.
We denote $\lambda=\frac{1}{C_3}$, $C=C_5$ and then get the sought generalized invariant manifold \eqref{RTL_oim}.

\end{document}